\newcommand{\yestag}{\stepcounter{equation}\tag{\theequation}}
\newcommand{\yesnumber}{\yestag}
\newtheorem{theorem}             {Theorem}[section]
\newtheorem{lemma}      [theorem]{Lemma}
\newtheorem{corollary}  [theorem]{Corollary}
\newtheorem{definition} [theorem]{Definition}
\newtheorem{remark}     [theorem]{Remark}
\title{Semantic Security for Quantum Wiretap Channels }
\author{Holger Boche\\
Lehrstuhl f\"ur Theoretische Informationstechnik,\\
Technische Universit\"at M\"unchen,\\
Munich Center for Quantum Science and Technology (MCQST) \\
Munich, Germany\\
CASA: Cyber Security in the Age\\ of Large-Scale Adversaries Exzellenzcluster,\\ Ruhr-Universität Bochum,\\
Bochum, Germany\\
boche@tum.de
\and
Minglai Cai\\
Grup d’Informaci\'o Qu\`antica, \\
Universitat Aut\`onoma de Barcelona,\\
Barcelona, Spain\\
minglai.cai@tum.de 
\and 
Christian Deppe \\
Lehr- und Forschungseinheit f\"ur Nachrichtentechnik\\ 
Technische Universit\"at M\"unchen,\\
Munich, Germany\\
christian.deppe@tum.de  
\and 
Roberto Ferrara \\
Lehr- und Forschungseinheit f\"ur Nachrichtentechnik\\ 
Technische Universit\"at M\"unchen,\\
Munich, Germany\\
roberto.ferrara@tum.de 
\and
Moritz Wiese \\
Lehrstuhl f\"ur Theoretische
Informationstechnik,\\
Technische Universit\"at M\"unchen,\\
Munich, Germany\\
CASA: Cyber Security in the Age\\ of Large-Scale Adversaries Exzellenzcluster,\\ Ruhr-Universität Bochum,\\
Bochum, Germany\\
wiese@tum.de
}
\begin{document}

\maketitle
\begin{abstract}
We consider the problem of semantic security via classical-quantum and quantum wiretap channels and use explicit constructions to transform a non-secure code into a semantically secure code achieving capacity by means of biregular irreducible functions (BRI functions). 
Explicit parameters in the finite regimes can be extracted from the theorems.
We also generalize the semantic security capacity theorem, which shows that a strongly secure code guarantees a semantically secure code with the same secrecy rate, to any quantum channel, including infinite-dimensional and non-Gaussian ones.
%
\end{abstract}


\section{Introduction}
We investigate the transmission of messages from a sending to a 
receiving party through a wiretap channel. In this model, there is a third party called an eavesdropper who must not be allowed to know the information sent from the sender to the intended receiver. The wiretap channel was first introduced by Wyner in~\cite{Wyn}. A classical-quantum channel with an eavesdropper is called a classical-quantum wiretap channel.

The secrecy capacity of the classical-quantum wiretap channel subject to the strong security criterion has been determined in~\cite{De,Ca/Wi/Ye}. Strong security means that given a  uniformly distributed message sent through the channel, the eavesdropper shall obtain no information about it. This criterion goes back to~\cite{Csi, Mau} and it is the most common secrecy criterion in classical and quantum information theory.

In the present paper, however, a stronger security requirement will be applied, called semantic security (and defined in \Cref{sect:strong_sem_sec}). 
With this, the eavesdropper gains no information regardless of the message distribution. 
This criterion was introduced to information theory from cryptography~\cite{Be/Te/Va} and also used earlier in QKD in~\cite[Theorem~2]{Hayashi07},
motivated by the analogous security criterion of the same name.
It is equivalent to  message indistinguishability, where the eavesdropper cannot distinguish whether the given cipher text is an encryption of any two messages (which can even be chosen by the eavesdropper). 
Aside from being the minimum security requirement in practical applications, 
semantic security is also necessary in the security of identification codes~\cite{ahlswede1989identification}. 
Because in identification pairs of messages are compared to each other, to make the code secure, any two messages must be indistinguishable at the eavesdropper~\cite{AZ95}. 
Message indistinguishability is thus necessary to construct secure identification codes, and the semantic security achieved in this paper can thus be used to construct secure identification codes via classical-quantum channels~\cite{Bo/De/Wi}.
At the same time but on a different note, bounds on identification find application on transmission via the wiretap channel~\cite{Hayashi06}.

In \Cref{sscocqwc}, we prove the well-known semantic-secrecy capacity formula, which is equal to the capacity formula under the strong security criterion, for the most general case of arbitrary, including infinite dimentional channel, under the most stringent semantic security criterion (in terms of mutual information).
This can be easy shown by slightly modifying the expurgation method~\cite{Ha15}, whereby a code which has a small leakage with respect to the strong security criterion is converted into a code which has a small leakage with respect to the semantic security leakage, making the strong and semantic secrecy capacities equal.
This statement was at first proven for classical wiretap channels in~\cite{Wi/Bo} and for finite dimensional classical-quantum channels in~\cite{Hay/Mat} and with security measured in trace norm in~\cite{Re/Re}.
Notice that that the expurgation technique given in~\cite{Ha15} works for any channels including infinite dimensional channel, even non-Gaussian ones.
The results for the classical-quantum channel extend to quantum channels where the environment, which  is completely under the control of a constant eavesdropper, can be entangled with the quantum system.

\begin{figure}
	\centering
	
	\includegraphics{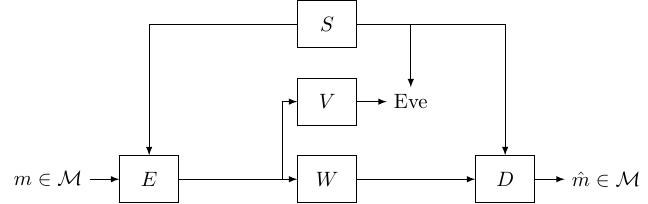}
    
    
	\caption{\label{fig:cr-scheme}A general common-randomness coding scheme. $W$ denotes the classical-quantum channel between sender and receiver, while $V$ is the wiretap. $\{\{E_m,D_m: m\in\mathcal{M}\}:s\in\mathcal{S}\}$ is a common-randomness code for $W$.}

    \bigskip

    \includegraphics{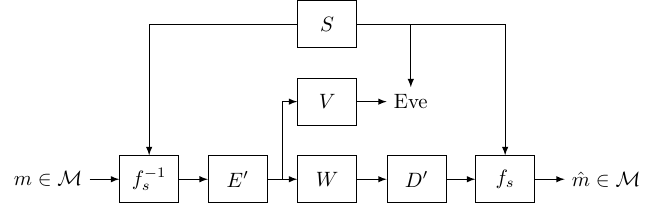}
    
    
	\caption{\label{fig:BT_scheme}The modular BRI scheme. $\mathcal{C} =\{E_m',D_m': m\in\mathcal{M}\}$ is a code for $W$ and $f$ is a biregular irreducible function. In practice, $\mathcal{C}$ will be a transmisison code, namely a code with low error. $f_s^{-1}$ denotes the random choice of an element of $f_s^{-1}(m)$ for the given message $m$ and seed $s$. The seed $s$ has to be known to the sender and receiver beforehand or it is generated by the sender and transmitted to the intended receiver through the channel.}
\end{figure}

The proofs applying the expurgation technique 
 are merely existence statements and give no clue as to how to find the large message subset which provides semantic security.
In \Cref{sscocqwcBRI}, we show how the capacity can be achieved by modularly correcting transmission error and amplifying privacy in separate components of the code, as for the case of strong secrecy~\cite{Hay_exp}.
While we make the proof for finite dimensional channels, the codes automatically also achieve capacity for quantum Gaussian wiretap channels, since the capacity is achieved as the limit of the capacities on finite-dimensional subspaces of increasing dimension~\cite{GSE08}.
These modular codes for the classical-quantum wiretap channel are constructed concatenating an ordinary transmission code for the channel from the sender to the intended receiver with an additional security component. 
Furthermore, the additional security component is independent of the channel (sometimes called channel universality~\cite{Hay/Mat}), as for the case of explicit strong-secrecy constructions.
The first such security components used in the literature were universal hash functions~\cite{Be/Br/Ro, Hay_exp}, used to achieve strong secrecy. The specific security components we use, called biregular irreducible functions (BRI functions), were introduced in~\cite{Wi/Bo} in the context of classical wiretap channels. A modular code for the classical-quantum wiretap channel is illustrated in \Cref{fig:BT_scheme}. 

If a transmission code from the sender to the intended receiver with input/output set $\mathcal C$ is given, then a BRI function that is to be used with this transmission code has the form $f:\mathcal S\times\mathcal C\to\mathcal N$. Here, $\mathcal S$ is a seed set, and the set $\mathcal M$ of messages of the modular wiretap code is an explicitly given subset of $\mathcal N$. 
To use this modular code, the sender and the intended receiver have to share a seed $s\in {\mathcal S}$, chosen uniformly at random from $\mathcal S$. Given any message $m \in {\mathcal M}$ and seed $s$, the sender randomly chooses a preimage $c\in\mathcal C$, satisfying $f_s(c)=m$. Since the intended receiver knows $s$, he can recover $m$ if no transmission error occurs. Thus the task of establishing reliable transmission is entirely due to the transmission code, while the BRI function's responsibility is to ensure semantic security. 
The above modular construction was already shown to achieve the secrecy capacity of classical wiretap channels with semantic security in~\cite{Wi/Bo}. An alternative to BRI functions was proposed by Hayashi and Matsumoto~\cite{Hay/Mat}. Their example, however, requires a seed which is longer than that which is necessary for the best-known BRI function.
The length of the seed is relevant for the efficiency of the derandomized codes at finite regimes.

We emphasize that the seed is not a secret key, since we do not require it to be unknown to the eavesdropper. The main part of the analysis of the above modular codes assumes that the seed is given (non-securely) to the sender and the intended receiver by common randomness. However, it is a general result for codes with common randomness that if the error probability and security leakage decrease sufficiently fast in block length, the seed can be reused a small number of times. Modular codes constructed using BRI functions show this behavior. Therefore, no more than a negligible amount of rate is lost if the sender generates the seed and transmits it to the intended receiver, and then reuses the seed a small number of times. In particular, any rate which is achievable with a seed given by common randomness is also achievable with a sender-generated seed.

Moreover we would like to emphasize that the semantic secrecy for classical-quantum channels is much harder than for classical channels. 
Roughly speaking, two different inputs not only result in two different random variables, but the outputs have also different eigenspace, so it is more difficult to make them indistinguishable.
For instance, \Cref{thm:bri-function} delivers a bound which is technically weaker than the classical version of \cite{Wi/Bo} (see below).


\section{Basic Notations and Definitions}
\label{BNaD}

The content of this section can be found in most books covering 
the basics of matrix analysis and quantum information, see, for example,~\cite{Bh,Ni/Ch,Wil}.
For a finite set $\mathcal{X}$, we denote the set of probability distributions on $\mathcal{X}$ by $P(\mathcal{X})$ and with $\mathbb{E}$ the expectation value.
For a finite-dimensional complex Hilbert space  $H$, 
we denote the set  of linear  operators on $H$ with $\mathcal{L}(H)$.
Let $\rho,\sigma \in \mathcal{L}(H)$ be  Hermitian operators in $\mathcal{L}(H)$.
We say $\rho\geq\sigma$, or equivalently $\sigma\leq\rho$, if $\rho-\sigma$ is positive-semidefinite.
The (convex) space of  density operators on $H$ is defined as
\[\mathcal{S}(H)\coloneqq \{\rho \in \mathcal{L}(H) :\rho \geq 0_{H},\ \tr(\rho) = 1 \},\]
where $0_{H}$ is the null matrix on $H$. 
Note that any operator in $\mathcal{S}(H)$ is bounded.
A POVM (positive-operator valued measure) over a finite set $\mathcal M$ is
a collection of positive-semidefinite operators 
$\left\{D_m: m\in \mathcal M\right\}$ on ${H}$, 
which is a partition of the identity, 
i.e. $\sum_{m\in\mathcal M} D_m=\mathrm{id}_{H}$.
The POVM describes a measurement that maps quantum states $\rho$ 
to classical values $m\in\mathcal M$ by assigning them the probability $\tr [\rho D_m]$.
If $\sum_{m\in\mathcal M} D_m \leq \mathrm{id}_{H}$ then we call $\left\{D_m: m\in \mathcal M\right\}$ a sub-POVM.
More generally, a measurement operator will be any positive semi-definite operator $D$ satisfying $0 \leq D \leq \mathrm{id}_{H}$.

\vspace{0.15cm}

For a quantum state $\rho\in \mathcal{S}(H)$, we denote its von Neumann
entropy by \[S(\rho)\coloneqq- \tr(\rho\log\rho)\text{,}\] 
and for a discrete random variable $X$,  on a finite set $\mathcal{X}$ 
we denote the Shannon entropy of $X$ by 
\begin{equation*}
    H(X)\coloneqq-\sum_{x \in \mathcal{X}}p(x)\log p(x),
\end{equation*}
where we use in both definitions and throughout this paper the convention
that the logarithm ``$\log$'' is taken in base $2$.
We denote with 
$h(\nu) \coloneqq -\nu \log \nu - (1- \nu) \log (1-\nu)$
for $\nu\in [0,1]$, the binary entropy.

Let $\rho$ and $\sigma$ be two positive semi-definite operators not necessarily in $\mathcal S(H)$. The quantum relative entropy between  $\rho$ and $\sigma$ is defined as
\[
    D(\rho\parallel\sigma) 
    \coloneqq \tr\rho\left(\log \rho- \log \sigma \right)
\]
if $supp(\rho) \subset supp(\sigma)$, and $D(\rho\parallel\sigma)\coloneqq\infty$ otherwise.
For $\alpha$ $\in(0,1)\cup(1,\infty)$, the R\'enyi relative entropy~\cite{Petz}
between  $\rho$ and $\sigma$ is defined as
\[
    D_{\alpha}(\rho\parallel\sigma) 
    \coloneqq \frac{1}{\alpha-1}\log \tr\left(\rho^{\alpha}\sigma^{1-\alpha}\right)
\]
if $supp(\rho) \subset supp(\sigma)$, and $D_\alpha(\rho\Vert\sigma)\coloneqq\infty$ otherwise.
The R\'enyi relative entropy satisfies the ordering relation, or parameter monotonicity,
\[
    D_{\alpha}(\rho\parallel\sigma)
    \leq D_{\alpha'}(\rho\parallel\sigma)
\]
for any density operators $\rho$ and $\sigma$, and any $\alpha\leq\alpha'$~\cite{Bu/Da,Mo/Hi}. Furthermore, it holds~\cite{Petz} that
\[
    \lim_{\alpha\nearrow 1} D_{\alpha}(\rho\parallel\sigma)
    = \lim_{\alpha\searrow 1} D_{\alpha}(\rho\parallel\sigma)
    = D(\rho\parallel\sigma)\text{ .}
\]
The R\'enyi relative entropies also satisfy channel monotonicity (monotonicity under quantum channels) for $\alpha\leq 2$ \cite{Ando,Mo/Hi,Au/Da}, namely under completely positive trace preserving linear maps $\Lambda$
\begin{equation*}
    D_{\alpha}(\Lambda(\rho)\parallel\Lambda(\sigma)) \leq D_{\alpha}(\rho\parallel\sigma).
\end{equation*}

For  finite-dimensional complex Hilbert spaces  $H$ and  $H'$, a \textbf{quantum channel} $N(\rho) $ is represented by a completely-positive trace-preserving linear map $N:\mathcal{L}(H) \to \mathcal{L}(H')$, which accepts input quantum states in $\mathcal{S}(H)$ and produces output quantum states in  $\mathcal{S}(H')$. Quantum channels will be treated in \Cref{sec:quantum}, building on the results for classical-quantum channels.
The case of classical-quantum channels will be treated in  \Cref{sect:strong_sem_sec,sscocqwc,sscocqwcBRI}. For a finite-dimensional complex Hilbert space $H$, a
\textbf{classical-quantum channel} is a map 
$V: \mathcal{X}\to\mathcal{S}(H)$, $ x \mapsto V(x)$.
In order to use the same notation common in classical information theory,
for a measurement operator $0\leq D\leq \mathrm{id}_H$, we define the following notation 
\begin{align*}
    \rho(D)&\coloneqq \tr(\rho\cdot D) 
    &
    V(D|x) &\coloneqq \tr(D \cdot V(x));
\end{align*}
notice that this notation, at least for the quantum state, is also common in the $C*$ algebra literature, where quantum states are considered functionals on hermitian operators.

For a probability distribution  $P$ and a classical-quantum channel $V$ on $\mathcal{X}$, the Holevo $\chi$ quantity, or Holevo information, is defined as
\begin{equation*}
    \chi(P;V)
    \coloneqq S\left(\sum_{x\in \mathcal{X}} P(x)V(x)\right)-
    \sum_{x\in \mathcal{X}} P(x)S\left(V(x)\right)\text{ .}
\end{equation*}
The Holevo information is also the mutual information between the input and the output.
Namely, let $\mathcal{H}_{\mathcal{X}}$ be 
a $|\mathcal{X}|$-dimensional Hilbert space 
with a set of orthonormal basis $\{ \ket{x} :x\in \mathcal{X}\}$,
and $X$ be the random variable on $\mathcal{X}$ with distribution $P$,
then the Holevo information is the quantum  mutual information $I(X\wedge V(X))$ for the state
\begin{equation*}
\rho_{X,V} \coloneqq 
    \sum_{x\in \mathcal{X}}  P(x) \ketbra{x} \otimes V(x).
\end{equation*}
The Holevo  quantity
can also be written as the expected value of  the 
quantum relative entropy as a  random variable 
of the states from the product states of their marginals, 
namely if we denote the marginal with 
\begin{align*}
    \rho_{X} &\coloneqq \sum_{x\in \mathcal{X}}  P(x) \ketbra{x}
&
    PV &\coloneqq \sum_{x\in \mathcal{X}}  P(x) V(x)
\\
    &= \mathbb{E}_X \ketbra{X}
&
    &= \mathbb{E}_X V(X)
\end{align*}
it is easy to check that the following holds
\begin{align}
\chi(P;V)
    \nonumber
    &= S(PV) - \sum_{x\in \mathcal{X}} P(x)S\left(V(x)\right)\\
    \label{eq:holevo-relent}
    &=D\qty(\rho_{X,V} \parallel \rho_{X} \otimes PV) \\
    &=\sum_{x\in\mathcal{X}}
      P(x) D\qty(V(x) \parallel PV)\notag\\
    \label{eq:holevo-avgrelent}
    &= \mathbb{E}_X D(V(X) \parallel \mathbb{E}_X V(X)) .
\end{align}
We then denote the conditional information of the quantum part 
conditioned on the classical one as
\begin{equation*}
    S(V|P) \coloneqq S(PV) - \chi(P;V) = 
    \sum_{x\in \mathcal{X}} P(x)S\left(V(x)\right).
\end{equation*}

\section{Strong and Semantic security}
\label{sect:strong_sem_sec}
\label{sec:semantic}

In this section we introduce the channels, the codes, and the capacities which we will study, as well as the definitions for strong secrecy and semantic security with and without common randomness. In \Cref{sec:bri} we will show how to explicitly build such codes using  modular coding schemes which require common randomness. For now, let us define the channels of interest.

\begin{definition}
Let $\mathcal{X}$ be a finite set and
$H$ and $H'$ be finite-dimensional complex Hilbert spaces. 
Let $W:\mathcal{X} \to \mathcal{S}(H)$ and 
${V}:\mathcal{X} \to \mathcal{S}(H')$ be classical-quantum channels. 
We call  the pair  $(W,{V})$ 
a \textbf{classical-quantum wiretap channel}.
\end{definition}

The intended receiver accesses the output of the first channel $W$, 
and the eavesdropper observes the output of the second channel ${V}$ in the pair.

\subsection{Codes}

A code is created by the sender and the intended receiver beforehand. The sender uses the encoder to map the
message that he wants to send to a channel input, while the intended receiver uses the family of decoder operators on the channel output to perform a measurement and decode the message. In all the definitions below let us fix 
$n\in\mathbb{N}$, finite sets $\mathcal{M}$ and $\mathcal{X}$, finite quantum systems $H$ and $H'$,
and a classical-quantum wiretap channel $(W,{V})$ from $\mathcal{X}$ to $H$ and $H'$.
Just like for the notation $PV$ for a channel $V:\mathcal{X}\to \mathcal{S}(H)$ 
and a probability distribution $P$ over $\mathcal{X}$,
we define $EV$ for a classical channel $E:\mathcal{M} \to \mathcal{P}(\mathcal{X})$, $m\mapsto E_m \equiv E(\cdot|m)$ as
\begin{alignat*}{2}
    EV: &\mathop{}\mathcal{M} &&\to \mathcal{S}(H),\\
        &\mathop{}m           &&\mapsto \sum_{x\in\mathcal{X} } V(x)E(x|m).
\end{alignat*}
Notice that now, for a measurement operator $D$, we have two ways of writing the output probability given $m$, namely $E_m V(D)$ and $EV(D|m)$.

\begin{definition}[Code, error, leakage]
\label{def:code}
\renewcommand{\emph}{\textbf}
\hfill\\
An $(n, |\mathcal{M}|)$  \emph{code}  for $(W,{V})$
is a finite set  
\begin{equation*}
    \mathcal{C}=\qty{E_m,D_m:m\in\mathcal{M}},
\end{equation*} 
where the stochastic encoder $E:\mathcal M\to\mathcal P(\mathcal X^n)$ is a classical channel, and the decoder operators $\qty{D_m: m\in \mathcal{M}}$ form a sub-POVM  on ${H}^{\otimes n}$.
We assume that the POVM is completed by associating the measurement operator $\mathrm{id}_H - \sum_{m\in\mathcal{M}}D_m$ to the the error/abortion symbol of the decoder.

The (maximum) \emph{error} (probability) of $\mathcal{C}$ is defined as 
\begin{align*}
    e(\mathcal{C}, n) 
      &\coloneqq 
    \sup_{m\in\mathcal{M}} E_m W^{\otimes n}(D_m^c)
    \\&=
    \sup_{m\in\mathcal{M}}
    \sum_{x^n \in \mathcal{X}^n}
    E(x^n|m)(1 - W^{\otimes n}(D_m|x^n)),
\end{align*}
where $D^c_m \coloneqq \mathrm{id}_{{H}^{\otimes n}} - D_m$.
For any random variable $M$ over the messages $\mathcal{M}$, 
the \emph{leakage} of $\mathcal C$ with respect to $M$ is defined as
\begin{gather*}
\label{eq:leakage}
\chi\qty(M;EV^{\otimes n}).
\end{gather*}
\end{definition}
Notice that we have chosen maximum rather than average transmission error; the reason is that as we allow the probability distribution to be arbitrary, the correctness and not just the secrecy of the message should be also guaranteed independently of the input distribution. Maximum error is the counterpart to measuring the leakage via maximum indistiguishability between the messages (generally via trace distance~\cite[Section~II-E]{Ha15}, or statistical distance for classical channels).

Observe that we consider codes with stochastic encoders as opposed to deterministic codes. In a deterministic code the encoder is deterministic, namely in \Cref{def:code} instead of a family of probability distributions $\{E_m\}_{m\in\mathcal M}$, the encoder consists of a family of $n$-length strings of symbols $\qty{c_m}_{m\in\mathcal{M}} \subseteq \mathcal{X}^n$.The deterministic encoder can be obtained as a special case of the stochastic encoder by imposing that every probability distribution $E_m$ is deterministic. For message transmission over an ordinary classical-quantum channel, and even for the most general case of  robust message transmission over an arbitrarily-varying  classical-quantum channel, it is enough to use deterministic encoders~\cite{Ahl/Bj/Bo/No,Bo/No}. However, for secret message transmission over wiretap channels, we need to use stochastic encoders~\cite{Ca/Wi/Ye,De}.
\label{detvsran}

Now we will define the coding scheme where both the sender and the receiver have access to common randomness. We do not require this common randomness to be secure against eavesdropping. Effectively, the common randomness simply decides which among a set of classical-quantum codes from \Cref{def:code},will be used.

\begin{definition}[Common-randomness code, error and leakage]
\label{abnjaqcftavc}
\label{def:code-cr}
\renewcommand{\emph}{\textbf}
\hfill\\
An $(n, |\mathcal{S}|, |\mathcal{M}|)$ \emph{common-randomness code} for $(W,{V})$ is a finite subset 
\begin{equation*}
    \qty{\mathcal{C}^{s}=\qty{(E^{s}_m,D^{s}_m):m\in\mathcal{M}}:s\in\mathcal{S}}
\end{equation*}
of the set of $(n, |\mathcal{M}|)$ codes from \Cref{def:code}, 
labeled by a finite set $\mathcal{S}$.

Let $S$, the seed, be a uniform random variable over $\mathcal{S}$. 
The (expected) \emph{error} of $\qty{\mathcal{C}^s:s\in\mathcal{S}}$ is defined as 
\begin{equation*}
    \mathbb{E}_S e(\mathcal{C}^S, n) 
    = \frac{1}{|\mathcal{S}|} \sum_{s\in\mathcal{S}} e(\mathcal{C}^s, n).
\end{equation*}
For any random variable $M$ over the messages independent of $S$, 
the \emph{leakage} of $\qty{\mathcal{C}^{s}}$ with respect to $M$ is defined as
\begin{gather}
\label{eq:leakage-cr}
\chi\qty(M;S,E^SV^{\otimes n}).
\end{gather}
\end{definition}

The definition of leakage reflects the possibility that 
the common randomness is known perfectly to the eavesdropper,
as the leakage is also computed against the common randomness.
Observe, that due to the independence of $S$ and $M$,
the leakage can be written as conditional mutual (or Holevo) information 
between the message $M$ and the output conditioned on the seed $S$:
\begin{equation}
\label{eq:leakage-conditioning}
\chi(M;S,E^SV^{\otimes n})
=\mathbb E_S\chi(M;E^SV^{\otimes n})
=\frac{1}{\mathcal{|S|}}\sum_{s\in\mathcal{S}} \chi(M;E^sV^{\otimes n}).
\end{equation}

The random seed should not be confused with the randomness of the stochastic encoder. In the stochastic encoder, only the sender, but not the receiver, 
randomly chooses a code word to encode a message $m$ according to the probability distribution $E_m$. In the subsequent definitions of achievable rates, the receiver should be able to  decode $m$ even when he only knows $E_m$, but not which code word is actually chosen by the sender. In contrast, a randomly chosen seed $s$ determines  a stochastic encoder $E^{s}$ for the sender and a set of decoder operators $\{D_m^{s}:m\in\mathcal{M}\}$ for the receiver. Correctness is required only for the case that $s$ is known to both the sender and the receiver and that they use the encoder and decoder prescribed by $s$.

\subsection{Capacities}

Next we define the strong and semantic secrecy rates which can be achieved by the codes introduced in the previous subsection. A good code reliably conveys private information to the intended receiver such that the wiretapper's knowledge of the transmitted information can be kept arbitrarily small in terms of the corresponding secrecy criterion. 

\begin{definition}[Strong secrecy]
\label{defofrate}
\renewcommand{\emph}{\textbf}
\hfill\\
A code $\mathcal{C}= \qty{(E_m, D_m) : m\in\mathcal{M}}$ 
is an $(n, R,\epsilon)$ \emph{strong secrecy code} for $(W,{V})$ if
\begin{gather} 
{\log |\mathcal{M}|} \geq nR
\\
e(\mathcal{C},n) < \epsilon\text{ ,}
\label{annian1}
\\
\chi\left(U;EV^{\otimes n}\right) < \epsilon\text{,}
\label{b40}
\end{gather}
where $U$ is the uniform distribution on $\mathcal{M}$.

$R$ is an \emph{achievable strong secrecy rate}
if for every $\epsilon>0$ and sufficiently large $n$,
there exists an $(n, R-\epsilon,\epsilon)$ strong secrecy code.
The \emph{strong secrecy capacity} $C_\textnormal{strong}(W,V)$ 
is the supremum of all achievable strong secrecy rates of $(W,{V})$.
\end{definition}

\begin{definition}[Common-randomness strong secrecy]%
\renewcommand{\emph}{\textbf}%
\hfill\\
A common-randomness code 
$\qty{\mathcal{C}^{s}=\qty{(E^{s}_m,D^{s}_m):m\in\mathcal{M}}:s\in\mathcal{S}}$ 
is an $(n, R,\epsilon)$ \emph{common-randomness strong secrecy code} for $(W,{V})$ if
\begin{gather} 
{\log |\mathcal{M}|} \geq nR\text{,}
\\
\mathbb{E}_S e(\mathcal{C}^S,n) < \epsilon\text{,}
\\
\label{eq:stronsecrecy-cr}
\mathbb{E}_S \chi\left(U;S,E^SV^{\otimes n}\right) < \epsilon\text{,}
\end{gather}
where $U$ is the uniform distribution on $\mathcal{M}$.

$R$ is an \emph{achievable common-randomness strong secrecy rate}
if for every $\epsilon>0$ and sufficiently large $n$,
there exists a $(n, R-\epsilon,\epsilon)$ common-randomness strong secrecy code.
The \emph{common-randomness strong secrecy capacity} $C_\textnormal{strong}(W,V; cr)$ 
is the least upper bound of all achievable common-randomness strong secrecy rates of $(W,{V})$.
\end{definition}

Since codes without common randomness are just a special case of common-randomness codes, we have by construction that 
\begin{equation*}
    C_\textnormal{strong}(W,V)\leq C_\textnormal{strong}(W,V;cr).
\end{equation*}

Strong secrecy, i.e., the requirements of \Cref{b40,eq:stronsecrecy-cr}, is the secrecy criterion which has been used mostly in information-theoretic security until the introduction of semantic security in~\cite{Be/Te/Va}. It provides secrecy if the message random variable is uniformly distributed. Inspired by cryptography,~\cite{Be/Te/Va}~introduced semantic security, where the eavesdropper shall not obtain any information regardless of the probability distribution of the message. This also is the reason why we use the maximum instead of the average transmission error. Semantic security and indistinguishability for the classical-quantum channels were first considered in~\cite{Ha15}.
Here we state the semantic security definitions.

\begin{definition}[Semantic secrecy]
\label{defofratesem}
\label{def:semsec}
\renewcommand{\emph}{\textbf}
\hfill\\
A code $\mathcal{C}= \qty{(E_m, D_m) : m\in\mathcal{M}}$ 
is an $(n, R,\epsilon)$ \emph{semantic secrecy code} for $(W,{V})$ if
\begin{gather}
{\log \mathcal{M}}\geq nR
\\
e(\mathcal{C},n) < \epsilon\text{ ,}
\label{annian1sem}
\\
\max_M \chi\left(M;EV^{\otimes n}\right) < \epsilon,
\label{b40sem}
\end{gather} 
where $M$ is any random variable over the messages $\mathcal{M}$.

$R$ is an \textbf{achievable semantic secrecy rate}
if for every $\epsilon>0$ and sufficiently large $n$,
there exists a $(n, R-\epsilon,\epsilon)$ semantic secrecy code.
The \emph{semantic secrecy capacity} $C_\textnormal{sem}(W,V)$ 
is the supremum of all achievable  semantic secrecy rates of $(W,V)$.
\end{definition}

\begin{definition}[Common-randomness semantic secrecy]
\label{csmcr}
\label{def:semsec-cr}
\renewcommand{\emph}{\textbf}
\hfill\\
A common-randomness code 
$\qty{\mathcal{C}^{s}=\qty{(E^{s}_m,D^{s}_m):m\in\mathcal{M}}:s\in\mathcal{S}}$ 
is an $(n, R,\epsilon)$ \emph{common-randomness semantic secrecy code} for $(W,{V})$ if
\begin{gather}
{\log |\mathcal{M}|} \geq nR
\\
\mathbb{E}_S e(\mathcal{C}^{S},n) < \epsilon\text{ ,}
\\
\label{eq:seccrit_cr}
\max_{M}\chi(M; S, E^{S}V^{\otimes n}) < \epsilon\text{ ,}
\end{gather}
{where} $M$ is any random variable over the messages $\mathcal{M}$.

$R$ is an \textbf{achievable common-randomness semantic secrecy rate}
if for every $\epsilon>0$ and sufficiently large $n$,
there exists a $(n, R-\epsilon,\epsilon)$ common-randomness semantic secrecy code.
The \emph{common-randomness semantic secrecy capacity} $C_\textnormal{sem}(W,V; cr)$ is the supremum of all achievable common-randomness semantic secrecy rates of $(W,{V})$.
\end{definition}

Just like for strong secrecy, and
due to common-randomness codes being more general, 
we have by construction
\begin{equation*}
    C_\textnormal{sem}(W,V)\leq C_\textnormal{sem}(W,V;cr).
\end{equation*}
Similarly, the semantic secrecy condition is stronger, 
meaning that any semantically secure capacity achieving code family,
is also a strongly secure code family and thus
\begin{align*}
    C_\textnormal{sem}(W,V)   &\leq C_\textnormal{strong}(W,V),
    \\
    C_\textnormal{sem}(W,V;cr)&\leq C_\textnormal{strong}(W,V;cr).
\end{align*}
Since the maxima in \Cref{b40sem} and \Cref{eq:seccrit_cr} range over all possible message distributions, semantic security in particular implies message indistinguishability. This means that even if the message random variable can only assume one of two possible values known to the eavesdropper, the eavesdropper cannot distinguish between these two messages. This is not implied by strong secrecy alone.

Notice that since the leakage of the common-randomness codes in \Cref{eq:leakage-cr} is computed against the state at the wiretap and the seed,
bounding the leakage in the common-randomness capacities implies bounding the information about the key carried by the seed.
Thus the common randomness is not required to be secure against eavesdropping,
since the \Cref{eq:stronsecrecy-cr,eq:seccrit_cr} impose that the seed carries no information, 
and thus it is considered to be public.

\subsection{Derandomization}
Derandomization is a standard and widely used technique in information theory, already used by Ahlswede in~\cite{Ahl1}.
As a final result in this section, we apply the derandomization technique to good common-randomness semantic-security codes, namely we construct a semantic-security code without common randomness using a transmission code and a common-randomness semantic-security code with appropriate error scaling. These derandomized codes will essentially be able to produce the common randomness needed to run the common-randomness codes using an asymptotically small number of copies of the channel.
The proof mimics the classical case showed in~\cite{Wi/Bo}.

A simple idea that uses too many channels to generate the seed, however, is to alternate transmission codes and common-randomness semantic-secrecy codes, use the transmission code to generate the seed,
and use it only once in the common randomness semantic-security code.
Depending on the size of the required seed, this may result in too many channels used just for the seed. 
The solution is to simply reuse the seed, thus reducing the total size of $|\mathcal{S}|$ by sharing the same $s\in \mathcal{S}$ for $N$ common-randomness codes.
We thus need to build $(N+1)$-tuple of codewords as the new codewords. Each tuple is a composition of a first codeword that generates the common-randomness and $N$ common randomness-assisted codewords to transmit the messages to the intended receiver.
We start by defining such codes.

\begin{definition}[Derandomizing codes]
\label{def:derandomized}
Let $n,n',N\in\mathbb{N}$. Let 
\begin{itemize}
    \item $\qty{E'_s,D'_s}_{s\in \mathcal{S}}$ 
    be an $(n',|\mathcal{S}|)$ code,
    \item $\qty{\qty{E^s_m, D^s_m \!:m\in\!\mathcal{M}}\!\!: s\in\mathcal{S}}$
    be an $(n, |\mathcal{S}|, |\mathcal{M}|)$ common-randomness code,
\end{itemize}
and define $\bar{\mathcal{M}} \coloneqq \mathcal{M}^N$ and for any $\bar{m}\in\bar{\mathcal{M}}$
\begin{align*}
    E^s_{\bar{m}} &\coloneqq E^s_{m_1} \cdot \ldots\cdot E^s_{m_N}
    &
    D^s_{\bar{m}} &\coloneqq D^s_{m_1} \otimes \dots \otimes D^s_{m_N}.
\end{align*}
We define their $(n'+nN, |\mathcal{M}|^N)$ derandomized code $\bar{\mathcal{C}}$
to be the code (without common randomness), such that for any message $\bar{m}\in\bar{\mathcal{M}}$
\begin{itemize}
\item 
    the encoder samples from a uniform seed $S$ and then, conditioned on the values $s$,
    uses the Kronecker product encoder $E'_s \cdot  E^s_{\bar{m}}$, thus
    \begin{align*}
    \bar{E}_{\bar{m}} \coloneqq \mathbb{E}_S  E'_S \cdot E^S_{\bar{m}}
    \text{, and} 
    \end{align*}
\item 
    the decoder for the message is the coarse graining of decoders over $s$ 
    \begin{align*}
    \bar{D}_{\bar m} 
    &\coloneqq
    \sum_{s\in\mathcal{S}}  D'_s \otimes D^s_{\bar{m}}
    .
    \end{align*}
\end{itemize}
\end{definition}

Note that the random seed in the derandomizing code becomes part of the stochastic encoding process of the code. As we expect, the error and the leakage of the derandomizing code is not worse than the sum of the errors and leakage of all the codes used in the process. This can be easily proved by simply applying the standard techniques (cf.~\cite{Ahl/Bli,Bo/Ca/De}) for derandomization with uniform distributed inputs on derandomization with arbitrary distributed inputs. Notice that the standard proof of security (cf.~\cite{Bo/Ca/De}) is nothing more than applying the quantum data processing inequality (cf.~\cite{Wil}) when we consider the derandomizing  code  as a function of its first part. Thus this argument  works for any inputs distribution. Nevertheless  we give a proof for the sake of completeness.

\begin{lemma}
\label{thm:Nderandom}
Let $\mathcal{C'}$ be an $(n',\frac{1}{n'}\log|\mathcal{S}|,\epsilon')$ transmission code,
and let $\qty{\mathcal{C}^s}_{s\in\mathcal{S}}$ be an $(n, R, \epsilon)$
common-randomness semantic-secrecy code.
Let $\bar{n} \coloneqq n'+nN$, then the $N$-derandomized code $\bar{\mathcal{C}}$ 
is an $(\bar{n}, \frac{n N}{\bar{n}} R, \epsilon'+\epsilon N)$ semantic-secrecy code.
\end{lemma}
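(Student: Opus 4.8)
The plan is to verify, one at a time, the three conditions of \Cref{def:semsec} for $\bar{\mathcal C}$ to be an $(\bar n,\tfrac{nN}{\bar n}R,\epsilon'+\epsilon N)$ semantic-secrecy code: the rate bound, the maximum error bound, and the semantic leakage bound. The rate is immediate, since the common-randomness code has rate $R$ we get $\log|\bar{\mathcal M}|=N\log|\mathcal M|\geq nNR=\bar n\cdot\tfrac{nN}{\bar n}R$.

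For the error, fix a message $\bar m=(m_1,\dots,m_N)$. Since encoders and channel factor over the blocks, the receiver's state is
\[
    \bar E_{\bar m}W^{\otimes\bar n}
    =\mathbb E_S\Big[(E'_S W^{\otimes n'})\otimes(E^S_{m_1}W^{\otimes n})\otimes\dots\otimes(E^S_{m_N}W^{\otimes n})\Big].
\]
Pairing this with $\bar D_{\bar m}=\sum_{s}D'_s\otimes D^s_{m_1}\otimes\dots\otimes D^s_{m_N}$ and keeping only the $s=S$ summand (all summands being non-negative operators), the probability of correct decoding is at least
\[
    \mathbb E_S\Big[E'_S W^{\otimes n'}(D'_S)\cdot\textstyle\prod_{i=1}^N E^S_{m_i}W^{\otimes n}(D^S_{m_i})\Big].
\]
I would then lower bound $E'_s W^{\otimes n'}(D'_s)\geq 1-\epsilon'$ by the error of the transmission code, and $\prod_{i}E^s_{m_i}W^{\otimes n}(D^s_{m_i})\geq 1-\sum_i E^s_{m_i}W^{\otimes n}((D^s_{m_i})^c)\geq 1-N\,e(\mathcal C^s,n)$ by a union bound; taking the expectation over $S$ and using $\mathbb E_S e(\mathcal C^S,n)<\epsilon$ gives a correct-decoding probability above $1-\epsilon'-\epsilon N$, uniformly in $\bar m$, hence $e(\bar{\mathcal C},\bar n)<\epsilon'+\epsilon N$. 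One also checks $\sum_{\bar m}\bar D_{\bar m}\leq\mathrm{id}$, so $\bar{\mathcal C}$ is a legitimate code.

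For the leakage we must bound $\chi(\bar M;\bar E V^{\otimes\bar n})$ for an \emph{arbitrary} distribution of $\bar M$ over $\mathcal M^N$, which may be correlated across the blocks; this is the step that genuinely differs from the strong-secrecy (uniform-message) derandomization. Writing $E_s$ for Eve's conditional state given seed value $s$, independence of $\bar M$ and $S$ together with data processing (adjoining the seed register and the chain rule) give $\chi(\bar M;\bar E V^{\otimes\bar n})\leq\mathbb E_S\,\chi(\bar M;E_S)$. Conditioned on $s$, Eve's state factors as $(E'_s V^{\otimes n'})\otimes\bigotimes_{i=1}^N(E^s_{m_i}V^{\otimes n})$, and the first factor does not depend on $\bar m$, hence can be dropped; conditioning on $\bar M$ makes the remaining $N$ blocks a product state, so the conditional von Neumann entropy splits as $\sum_i$ of the per-block conditional entropies, and subadditivity of the von Neumann entropy of the unconditioned state then yields $\chi(\bar M;E_s)\leq\sum_{i=1}^N\chi(M_i;E^s V^{\otimes n})$, with $M_i$ the $i$-th marginal of $\bar M$. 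Taking the expectation over $S$ and invoking \Cref{eq:leakage-conditioning} and the semantic security of the common-randomness code, $\mathbb E_S\chi(M_i;E^S V^{\otimes n})=\chi(M_i;S,E^S V^{\otimes n})<\epsilon$ for each $i$, so $\chi(\bar M;\bar E V^{\otimes\bar n})<\epsilon N\leq\epsilon'+\epsilon N$.

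The main obstacle is the leakage step: one must justify carefully that (i) handing Eve the seed register in addition can only increase the Holevo quantity while contributing nothing, because $\bar M$ is independent of $S$ and the seed-transmission part of Eve's state is independent of the message, and (ii) the reduction to the per-block marginals $M_i$ via subadditivity, which is precisely the mechanism by which a bound on the semantic leakage of a single use of the common-randomness code controls the leakage of an arbitrarily correlated $N$-block message. The rate and error parts are routine, following the classical template of~\cite{Wi/Bo}.
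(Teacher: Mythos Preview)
Your proposal is correct and follows essentially the same route as the paper: rate is immediate, the error bound comes from keeping only the diagonal $s=S$ term in the decoder sum and a union bound over the $N$ blocks, and the leakage bound proceeds by handing Eve the seed (data processing), conditioning on $S$ via \Cref{eq:leakage-conditioning}, dropping the message-independent seed-transmission block, and then splitting the Holevo quantity over the $N$ blocks using additivity of the conditional entropy and subadditivity of the unconditioned entropy to reduce to the per-block marginals $M_i$. Your error argument is in fact slightly cleaner than the paper's: you apply the union bound per seed and only then average over $S$, correctly using that merely $\mathbb E_S e(\mathcal C^S,n)<\epsilon$ is assumed, whereas the paper's write-up inserts $(1-\epsilon)^N$ inside the sum over $s$ as if each seed individually had error below $\epsilon$.
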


\begin{proof}
The $N$-derandomized code has size $|\mathcal{M}|^N$,
thus the rate is $N \log |\mathcal{M}| / \bar{n} \geq n N R / \bar{n}$.
We just need to bound the error and leakage of the new code.

For the error of $\bar{\mathcal{C}}$, by standard argument we have
that for every $\bar{m}\in\bar{\mathcal{M}}$
\begin{align*}
    \bar{E}_{\bar{m}} &W^{\otimes \bar{n}}(\bar{D}_{\bar{m}}^c)
    \\& =
    1 - \sum _s \frac{1}{|\mathcal{S}|} E'_{s}W^{\otimes n'} 
    \otimes E^s_{m_1}W^{\otimes n} \otimes \dots 
    \otimes E^s_{m_N}W^{\otimes n} (\bar{D}_{\bar{m}}) 
    \\& = 
    1 - \sum _{s,s'} \frac{1}{|\mathcal{S}|} E'_{s}W^{\otimes n'}(D'_{s'}) 
    \otimes E^s_{m_1}W^{\otimes n}(D^{s'}_{m_1}) \otimes \dots 
    \otimes E^s_{m_N}W^{\otimes n}(D^{s'}_{m_N}) 
    \\& \leq
    1 - \sum _{s} \frac{1}{|\mathcal{S}|} E'_{s}W^{\otimes n'}(D'_{s}) 
    \otimes E^s_{m_1}W^{\otimes n}(D^{s}_{m_1}) \otimes \dots 
    \otimes E^s_{m_N}W^{\otimes n}(D^{s}_{m_N}) 
    \\& \leq
    1 - \sum _{s} \frac{1}{|\mathcal{S}|} (1-\epsilon') (1-\epsilon)^N
    \\& \leq
    \epsilon' + \epsilon N,
\end{align*}
and thus $e(\bar{\mathcal{C}},\bar{n}) \leq \epsilon' + \epsilon N$.

For the leakage, since we are reusing the seed and the seed is shared via the transmission code, a uniform seed does not map a random message to a uniformly random input to the channel. Thus we need to reduce the security to the security of the single codes.
Recalling that the the Holevo information is actually mutual information, by data processing and \Cref{eq:leakage-conditioning}, we have
\begin{align*}
    \chi\qty(\bar{M}; \bar{E} V^{\otimes \bar{n}})
      &=
    \chi\qty(\bar{M}; E'_SV^{\otimes n'}\otimes E^S_{\bar{M}}V^{\otimes nN})
    \\&\leq
    \chi\qty(\bar{M}; S, E^S_{\bar{M}}V^{\otimes nN})
    \\&\leq
    \mathbb{E}_S \chi\qty(\bar{M}; E^S_{\bar{M}}V^{\otimes nN})
    \\&= 
    \mathbb{E}_S \qty
    [H\qty(\mathbb{E}_{\bar{M}}       E^S_{\bar M}V^{\otimes nN}) - 
           \mathbb{E}_{\bar{M}} H\qty(E^S_{\bar M}V^{\otimes nN})].
\intertext{Since the message encoder is a Kronecker product of encoders, 
and the channel is memoryless, we have 
$H\qty(E^S_{\bar M}V^{\otimes nN})= \sum H\qty(E^S_{M_i} V^{\otimes n})$, and thus}
    \chi(\bar{M}; \bar{E} V^{\otimes \bar{n}})
      &\leq 
    \mathbb{E}_S \qty
    [H\qty(\mathbb{E}_{\bar{M}}                      E^S_{\bar{M}}V^{\otimes nN}) - 
           \mathbb{E}_{\bar{M}} \sum_{i=1,\dots,N} H(E^S_{M_i}    V^{\otimes n })].
\intertext
{This together with $H(XY) \leq H(X) + H(Y)$ applied to 
$H\qty(\mathbb{E}_{\bar{M}}   E^S_{\bar M}V^{\otimes nN})$ gives}
    \chi(\bar{M}; \bar{E} V^{\otimes \bar{n}})
      &\leq 
    \mathbb{E}_S \qty
    [\sum_{i=1,\dots,N} H\qty(\mathbb{E}_{M_i}           E^S_{M_i}V^{\otimes n}) - 
     \sum_{i=1,\dots,N}       \mathbb{E}_{\bar{M}} H\qty(E^S_{M_i}V^{\otimes n})]
    \\&=
    \mathbb{E}_S \sum_{i=1,\dots,N} 
    \chi\qty({M}_i; E^S_{M_i}V^{\otimes n})
    \\&=
    \sum_{i=1,\dots,N} 
    \chi\qty({M}_i; S, E^S_{M_i}V^{\otimes n})
    \\&\leq 
    \epsilon N \leq \epsilon'+ \epsilon N,
    \yesnumber
    \label{thm:Nderandomineq}
\end{align*}
and the proof is concluded.
\end{proof}

Notice that the argument works for any distribution of $\bar{M}$, 
the single uses of the semantic-secrecy code do not need to have independent messages.
This is usually a point of difference with the derandomization techniques used for strong secrecy. In strong secrecy $\bar{M}$ is only required to be uniformly distributed, which makes each $M_i$ already independent and also uniformly distributed. This allows for an easier but not fully general argument, since the leakage of the derandomized code is actually equal to the sum of the leakages of the single internal codes.

We will use the above in \Cref{sec:bri} to derandomize the explicit constructions of semantic secrecy codes.

\subsection{Quantum  Channels}
\label{totcwtraqc}
\label{sec:quantum}
\renewcommand{\mathfrak}{\mathrm}

The results from classical secret message transmission over classical-quantum channels can usually be carried over to fully quantum channels. Moreover, this is optimal, in the sense that it is usually enough to just prepend a classical-quantum preprocessing channel to many copies of the quantum channel and then use the coding for the resulting classical-quantum channel.
The extension to quantum channels reduces to simply proving \Cref{NbdettctiVw}, which is straightforward and uses quite general arguments.
More precisely, since the encoding of classical messages for any quantum channel will need to map the classical messages to quantum states, the resulting effect at the sender is again a classical-quantum channel, and thus we can reduce the analysis to what we have done so far for classical-quantum channels.

For classical and classical-quantum channels, the wiretap channel must be given, in the sense that an assumption must be made about the output seen at the eavesdropper, simply because the worst case scenario, that the eavesdropper receives a noiseless copy of the input, is always physically possible.
This is not the case for quantum channels, where one of the aspects of no-cloning implies that a copy of the input quantum state cannot be made, and the worst case interaction with the environment can be deduced from the noise in the channel.
Since there is a limit to the information that it is leaked to the environment, there is thus also a limit to the information of the eavesdropper, and we can then remove any assumption in that respect, and identify the eavesdropper with the environment~\cite{Be/Br,De}.

\smallskip
Let now $\mathfrak{P}$ and $\mathfrak{Q}$ be quantum systems, and let $W$ be a quantum  channel. 
We assume, as usual in the quantum setting, the worst case scenario, namely that the environment $\mathfrak{E}$ is completely under the control of the eavesdropper, which is in contrast with the classical and classical-quantum setting where this worst case scenario does not allow for secrecy.
This automatically defines the wiretap channel $(W,V)$ for any given quanutm channel $W$ to the intended receiver.
However the results below work in general for any allowed pair of quantum channels $(W,V)$ on the same input.



\begin{definition}
A quantum wiretap channel from a sender $\mathfrak{P}$ to a receiver $\mathfrak{Q}$ with eavesdropper $\mathfrak{E}$ 
is a pair of complementary channels $(W,{V})$,  where
$W: \mathcal{S}(H_{\mathfrak{P}}) \to \mathcal{S}(H_{\mathfrak{Q}})$ and ${V}: \mathcal{S}(H_{\mathfrak{P}}) \to \mathcal{S}(H_{\mathfrak{E}})$ are defined as
$W(\rho) = \tr_{\mathfrak{E}} 
\left(U \rho U^*\right)$ and
$V(\rho) = \tr_{\mathfrak{Q}} 
\left(U \rho U^*\right)$
for some isometry $U: H_{\mathfrak{P}} \to H_{\mathfrak{Q}} \otimes H_{\mathfrak{E}}$.

\end{definition}

\begin{remark}
Without the assumption that the eavesdropper might have full access to the environment, the treatment of the semantic secrecy capacity is still the same. 
In this case the wiretap channel must be specified explicitly as $(W,V)$, where both $W$ and $V$ are quantum channels. However not all pairs are allowed, as $V$ must be a channel that can be recovered from the environment. The generalization is that $W$ and $V$ must be of the form $W = \tr_{\mathfrak{E}\mathfrak{R}} \qty(U \rho U^*)$ and $V = \tr_{\mathfrak{Q}\mathfrak{R}} \qty(U \rho U^*)$, where now the isometry $U: H_{\mathfrak{P}} \to H_{\mathfrak{Q}}\otimes H_{\mathfrak{E}}\otimes H_{\mathfrak{R}}$ maps to three systems, the intended receiver, the eavesdropper, and an environment not in possession of the eavesdropper.
\end{remark}

We can transmit both classical and quantum information over quantum channels. For the transmission of classical information via a quantum channel, we  first have to convert a classical message into a quantum state. We assume that the states produced in the input system are constructed depending on the value of $x\in\mathcal{X}$,  where $\mathcal{X}$ is a finite set of letters.
Let thus $F:\mathcal{X} \to \mathcal{S}(H_{\mathfrak{P}})$ be this classical-quantum channel. The composition with a quantum channel $W$ defines the classical-quantum channel ${W} \circ F:\mathcal{X}\to\mathcal{S}(H^\mathfrak{Q})$; to keep a consistent notation we define $F W \equiv W \circ F$.
With this notation, the definitions present only minimal changes in comparison to the classical-quantum wiretap channels above.
A code for the quantum channels now simply needs to input quantum states instead of classical values.

\begin{definition}
\renewcommand{\emph}{\textbf}
An $(n, |\mathcal{M}|)$ \emph{quantum code} for a quantum channel $W$ consists of a finite set $\mathcal{C}= \qty{E_m,D_m:m\in\mathcal{M}}$, where the stochastic encoder $E:\mathcal{M}\to \mathcal{S}(H_{\mathfrak{P}} ^{ \otimes n})$ is a classical-quantum channel,
and the decoders $\qty{D_m: m\in \mathcal{M}}$ form a sub-POVM.

The  error of $\mathcal{C}$ is defined as
\[e(\mathcal{C}, n) \coloneqq \frac{1}{|\mathcal{M}|} \max_{m\in\mathcal{M}} E_m W^{\otimes n}(D_m^c)\text{ .}\] 

The leakage of a message random variable $M$ over $\mathcal{M}$ is defined as 
\[\chi\qty(M;E_m V^{\otimes n}),\]
where $V$ is the complementary channel to the environment.
\end{definition} 

The rates and capacities can then be defined exactly as is done for classical-quantum channels.
Since we will use these definitions only briefly in \Cref{NbdettctiVw}, we limit ourselves to directly defining the capacities.

\begin{definition}
\label{defofrateqq}
The strong secrecy capacity $C_\textnormal{strong}(W)$ is the largest real number such that for every $\epsilon>0$ and sufficiently large $n$ there exists a finite set $\mathcal{X}$ and an  $(n, |\mathcal{M}|)$ code $\mathcal{C} =  \qty{E_m, D_m : m\in\mathcal{M}}$,  such that 
\begin{gather} 
{\log |\mathcal{M}|}> n(C_\textnormal{strong}(W)-\epsilon)
\\
\label{annian1qq}
 e(\mathcal{C},n) < \epsilon\text{ ,}
\\
\label{b40qq}
\chi\left(U;E_m V^{\otimes n}\right) < \epsilon\text{,}
\end{gather}
where $U$ is the uniform random variable over $\mathcal{M}$.
\end{definition}

\begin{definition}
\label{defofrateqqsem}
The  semantic   secrecy
capacity of $W$, denoted by $C_\textnormal{sem}(W)$
is the largest real number such that for every $\epsilon>0$ and sufficiently large $n$ there exists an  $(n, |\mathcal{M}|)$ code $\mathcal{C} = \{E_m, D_m : m\in\mathcal{M}\}$, such that  for any  random variable $M$ with  arbitrary distribution  on $\mathcal{M}$
\begin{align} 
{\log |\mathcal{M}|}> n(R-\epsilon)
\\
\label{annian1qqsem}
e(\mathcal{C},n) < \epsilon\text{ ,}
\\
\label{b40qqsem}
\chi\left(M;E_m V^{\otimes n}\right) < \epsilon\text{ .}
\end{align} 
\end{definition}

Notice that the choice of environment channel does not affect the definitions of capacity.
Let ${V}$ and $V'$ be two distinct complementary channels to $W$, then ${V}'$ and ${V}$ are equivalent in the sense that there is a partial isometry
$U$ such that for all input states $\rho \in  \mathcal{S}(H_{\mathfrak{P}})$ we have ${V}'(\rho)= U^{*} {V}(\rho) U$~\cite{Pa,Ho2}.
The action of the partial isometry is reversible and thus the leakage is the same (being a mutual information, which is non-increasing under local operations).
Therefore the security criteria in \Cref{defofrateqq,defofrateqqsem} does not depend on the choice of complementary channel.

\bigskip
With the definitions in place, the following sections are dedicated to prove our results.
In \Cref{sec:non-constructive} we prove that we can change any strong secrecy capacity achieving codes into semantic secrecy capacity achieving codes.
However the result is non constructive, which is why in \Cref{sec:bri} we provide a semi-constructive proof where we concatenate functions to suitable transmission codes to convert them into semantic secrecy capacity achieving codes.
The final section is dedicated to generalizing the results from classical-quantum channels to quantum channels.

\section{Semantic Secrecy Capacity}
\label{sscocqwc}
\label{sec:non-constructive}

\begin{figure}
    \centering
    \includegraphics{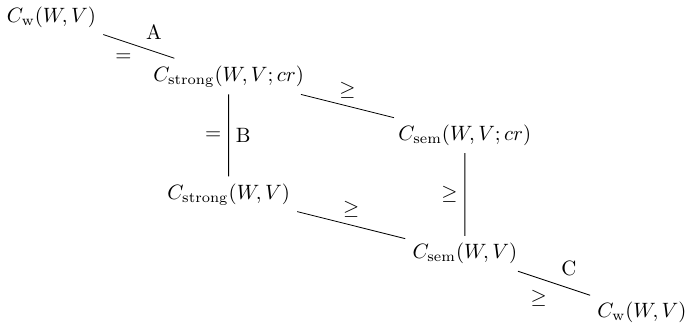}
    \caption{Relationship between the secrecy capacities
    and the coding theorems.
    ``A'' follows from \cite{De,Bo/Ca/De2}, ``B'' from \cite{De,Bo/Ca/De2} and ``C'' from \Cref{thm:capacity}.
    The inequalities without reference are obvious and follow from the definitions;
    allowing common randomness can only increase the capacity.
    Similarly, relaxing from semantically secure code 
    to strongly secure codes can only increase the capacity.
    It follows from $C_{\text{w}}(W,V)$ being both the upper bound and the lower bound,
    that all the quantities are equal.}
    \label{fig:capacities}
\end{figure}

We denote
\begin{equation}
\label{eq:wiretap-capacity}
C_{\text{w}} (W,V) 
\coloneqq \sup_{n\to\infty}\frac{1}{n} \max_{P,E}
(\chi(P;EW^{\otimes n})- \chi(P;EV^{\otimes  n})),
\end{equation}
where "w" stands for wiretap, and
the maximum is taken over finite input sets $\mathcal{M}$,
input probability distributions $P$ on $\mathcal{M}$,
and classical channels $E:\mathcal{M}\to P(\mathcal{X}^n)$.

$C_{\textnormal{w}} (W,V)$ was first proven in~\cite{De} to equal the strong secrecy capacity of the classical-quantum wiretap channel.
The result was extended in~\cite{Bo/Ca/De2} to the common-randomness strong secrecy capacity as a particular case of arbitrarily-varying classical-quantum wiretap channels.
Namely, we have 
\begin{equation}
\label{lem:strong_sec_CQ_wiretap}
    C_\textnormal{strong} (W,V; cr) = C_\textnormal{strong} (W,V) = C_{\textnormal{w}} (W,V).
\end{equation}
For now we will not actually use the explicit expression of $C_\textnormal{w}$.
Since a semantically secure code is always also strongly secure, the converse theorems for strong secrecy are also strong converses for semantic secrecy, as displayed schematically in \Cref{fig:capacities}.

By the results of \cite{Ha15}, it is easy to see that
$C_\textnormal{sem} (W,V) \geq C_\textnormal{strong} (W,V)$,
when we apply the standard expurgation technique  given in \cite{Ha15} to 
our channel to
convert a strong secrecy code into a semantic secrecy code without asymptotic rate loss (see  also \cite{Hay/Mat} for the classical finite  case):
For any $\epsilon>0$, by the  definition of $C_{\text{strong}}(W,V)$,
there exists a $\delta>0$ such that for all sufficiently large $n$, there exists an $(n,|\mathcal{M}_n|)$ strong secrecy code $\mathcal{C}_n$ satisfying
$|\mathcal{M}_n| \geq 2^{n(C_\textnormal{strong} (W,V)-\epsilon)}$,
$e(\mathcal{C}_n, n)\leq2^{-n\delta}$, and
   $ \chi(U_n; EV^{\otimes n})\leq2^{-n\delta}$,
where $U_n$ is the uniform distribution over $\mathcal{M}_n$, $E$ is the encoder of the code. 
We have $\chi(U_n; EV^{\otimes n}) = \frac{1}{|\mathcal{M}_n|}\sum_{m\in \mathcal{M}_n} D(E_m V^{\otimes n}\|U_n E V^{\otimes n})$. 
Thus, as per expurgation in \cite{Ha15},  there is a subcode $\mathcal{C}_n'$ of size $\mathcal{M}_n'\geq \mathcal{M}_n/2$
such that we can choose a $0<\delta'<\delta$ such that for any $m \in \mathcal{M}_n'$ and sufficiently large $n$,
we have 
$D(E_m V^{\otimes n}\|U_n E V^{\otimes n}) \leq 2 \cdot 2^{-n\delta} < 2^{-n\delta'}$.
Notice that the encoder is the same, we are just restricting the set of messages.
Then, for any probability distribution $P$ on the new message set $\mathcal{M}_n'$, by \cite[Eq.~(4.7)]{Hay}
we have 
\begin{align}
\chi(P ; E V^{\otimes n}) 
  &= \sum_{m\in \mathcal{M}_n} P_m D( E_m V^{\otimes n}\|P E V^{\otimes n}) 
\\&= \min_{\sigma}\sum_{m\in \mathcal{M}_n}P(m) D(E_m V^{\otimes n}\|\sigma) 
\\&\leq \sum_{m\in \mathcal{M}_n} P_m D(E_m V^{\otimes n}\|U_n E V^{\otimes n})
\\&\leq 2^{-n\delta'}.
\end{align}
Since  $\mathcal{C}_n'$  is a subcode we also have $e(\mathcal C_n',n)\leq 2^{-n\delta}< 2^{-n\delta'}$.
Since $\mathcal M_n'$ contains at least half of the messages of $\mathcal M_n$, the expurgation technique given in \cite{Ha15} immediately delivers the following theorem, independently of the type of channel.


\begin{theorem}
\label{thm:capacity}
Let $(W,V)$ be a classical-quantum wiretap channel. With the same notation as in \Cref{lem:strong_sec_CQ_wiretap}, we have
\begin{equation*}
    C_\textnormal{sem} (W,V) = C_\textnormal{sem} (W,V; cr)  
    = C_\textnormal{strong} (W,V) = C_{\textnormal{w}} (W,V).
\end{equation*}  
\end{theorem}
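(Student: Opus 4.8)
The plan is to prove only the single inequality $C_\textnormal{sem}(W,V) \geq C_\textnormal{strong}(W,V)$. Everything else is already in place: the definitions give $C_\textnormal{sem}(W,V) \leq C_\textnormal{sem}(W,V;cr) \leq C_\textnormal{strong}(W,V;cr)$, and \Cref{lem:strong_sec_CQ_wiretap} identifies $C_\textnormal{strong}(W,V;cr) = C_\textnormal{strong}(W,V) = C_\textnormal{w}(W,V)$, so once we also have $C_\textnormal{strong}(W,V) \leq C_\textnormal{sem}(W,V)$ all five quantities collapse to one number, as pictured in \Cref{fig:capacities}. The mechanism for the missing inequality, mimicking the classical argument of \cite{Wi/Bo} (and the trace-norm version of \cite{Re/Re}), is message-set expurgation: a strong secrecy code already hides a \emph{uniformly} distributed message, and on a suitably chosen large subset of messages the leakage turns out to be small for \emph{every} distribution.

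Concretely, fix $R < C_\textnormal{strong}(W,V)$ and a null sequence $\epsilon_n \searrow 0$. For each large $n$ choose an $(n, R-\epsilon_n, \epsilon_n)$ strong secrecy code $\mathcal{C} = \{E_m, D_m : m \in \mathcal{M}\}$ and set $\sigma \coloneqq |\mathcal{M}|^{-1}\sum_{m\in\mathcal{M}} E_m V^{\otimes n}$, the eavesdropper's average output state under the uniform message. By the average-relative-entropy form of the Holevo quantity, \Cref{eq:holevo-avgrelent}, strong secrecy reads
\[
\frac{1}{|\mathcal{M}|}\sum_{m\in\mathcal{M}} D\bigl(E_m V^{\otimes n}\,\big\|\,\sigma\bigr) \;=\; \chi\bigl(U;E V^{\otimes n}\bigr) \;<\; \epsilon_n .
\]
Each summand is finite because $\operatorname{supp}(E_m V^{\otimes n}) \subseteq \operatorname{supp}(\sigma)$, so Markov's inequality shows that $\{m : D(E_m V^{\otimes n}\|\sigma) > 2\epsilon_n\}$ has fewer than $|\mathcal{M}|/2$ elements. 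Let $\mathcal{M}'$ be its complement and $\mathcal{C}' \coloneqq \{E_m, D_m : m \in \mathcal{M}'\}$. The operators $\{D_m : m \in \mathcal{M}'\}$ still form a sub-POVM, and restricting to $\mathcal{M}'$ cannot increase the error, so $e(\mathcal{C}',n) < \epsilon_n$, while $\log|\mathcal{M}'| \geq \log|\mathcal{M}| - 1 \geq n(R-\epsilon_n) - 1$.

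It remains to control $\chi(M; E V^{\otimes n})$ for an arbitrary message variable $M$ supported on $\mathcal{M}'$. The tool is the variational bound that, for every state $\tau$,
\[
\chi\bigl(M; E V^{\otimes n}\bigr) \;=\; \sum_{m} P_M(m)\, D\bigl(E_m V^{\otimes n}\,\big\|\,\tau\bigr) \;-\; D\Bigl(\textstyle\sum_m P_M(m) E_m V^{\otimes n}\,\Big\|\,\tau\Bigr) \;\leq\; \sum_{m} P_M(m)\, D\bigl(E_m V^{\otimes n}\,\big\|\,\tau\bigr),
\]
i.e.\ the Holevo quantity is the minimum over the reference state $\tau$ of the expected relative entropy (the difference is the nonnegative quantity $D(\sum_m P_M(m)E_m V^{\otimes n}\|\tau)$). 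Choosing $\tau = \sigma$ and using $D(E_m V^{\otimes n}\|\sigma) \leq 2\epsilon_n$ for all $m \in \mathcal{M}'$ gives $\max_M \chi(M; E V^{\otimes n}) \leq 2\epsilon_n$. Hence $\mathcal{C}'$ is an $(n, R-\epsilon_n-\tfrac1n, 2\epsilon_n)$ semantic secrecy code; letting $n\to\infty$ makes $R$ an achievable semantic secrecy rate, and since $R<C_\textnormal{strong}(W,V)$ was arbitrary we obtain $C_\textnormal{sem}(W,V) \geq C_\textnormal{strong}(W,V)$, closing the loop.

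I expect no real analytic obstacle here: the argument is just Markov's inequality together with the variational formula for $\chi$, both of which transfer verbatim from the classical setting, and no exponential decay of the leakage is needed (unlike in the derandomization \Cref{thm:Nderandom}). The only genuinely unsatisfactory feature, inherited from \cite{Wi/Bo,Re/Re}, is that the subset $\mathcal{M}'$ is produced non-constructively — it is defined through the relative entropies $D(E_m V^{\otimes n}\|\sigma)$, which depend on the wiretap channel in an opaque way — so the proof gives no procedure for exhibiting it. Removing this defect is precisely the motivation for the modular BRI construction carried out in \Cref{sscocqwcBRI}.
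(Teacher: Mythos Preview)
Your argument is correct and in fact cleaner than the paper's. Both proofs reduce to the single inequality $C_\textnormal{sem}(W,V)\geq C_\textnormal{strong}(W,V)$ and both proceed by expurgating a strong secrecy code down to a large message subset on which every distribution is safe. The difference is in how the expurgated set is identified and how the resulting semantic leakage is bounded. The paper first passes from the Holevo bound to a trace-norm bound via the quantum Pinsker inequality, applies Markov's inequality in trace norm, and then returns to the Holevo quantity through Shirokov's continuity bound for the mutual information; this round trip costs a factor $\log|\mathcal{M}_n'|\sim n$ and therefore forces the paper to invoke the \emph{exponential} decay of the strong-secrecy leakage established in \cite{De}. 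You instead stay in relative entropy throughout: Markov's inequality is applied directly to the summands $D(E_mV^{\otimes n}\Vert\sigma)$ of \Cref{eq:holevo-avgrelent}, and the variational identity $\chi(P;V)=\min_\tau\sum_mP(m)D(V(m)\Vert\tau)$ converts the per-message bound into a uniform bound on $\chi$. This avoids both Pinsker and Fannes--Shirokov, needs no exponential rate of decay, and gives the semantic leakage as $2\epsilon_n$---so if the input code happens to have exponentially small leakage, so does the output code. One small wording issue: you cannot fix the null sequence $\epsilon_n$ in advance and then demand an $(n,R-\epsilon_n,\epsilon_n)$ code; rather, the achievability of $R$ furnishes such a sequence via a standard diagonal argument.
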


Notice  that the expurgation technique given in \cite{Ha15} 
makes no assumption on the dimension of the quantum systems, namely the outputs of the wiretap channels considered below can be infinite dimensional.

\subsubsection*{Quantum Channels}
Let $W$ now be a quantum channel, thus defining the quantum wiretap channel to be the complementary channel to the environment.
Just like the case of the classical-quantum channel, the strong and semantic secrecy capacities are equal. This time, rather than transforming a strong secrecy code into a semantic secrecy code, we simply generalize the result from classical-quantum channels to quantum channels, in the same way it was done for strong secrecy in~\cite{De}.

In~\cite{De} it was proven that the strong secrecy  capacity $C_\textnormal{strong}(W)$ can be computed using the following multi-letter formula.
\begin{equation} 
C_\textnormal{strong}(W) =
  \sup_{n\in\mathbb{N}} \frac{1}{n} \sup_{\mathcal{X}, F, P}\left(
  \chi(P,F W^{\otimes n}) -\chi (P,F V^{\otimes n})\right),
\end{equation}
where ${V}$ is the channel to the environment
defined by  $W$.
The supremum is taken over all chosen  finite sets $\mathcal{X}$, classical/quantum channels $F:\mathcal{X}\to \mathcal{S}(H_{\mathfrak{P}}^{\otimes n})$, and probability distributions $P$ on $ \mathcal{X}$.
Notice how the classical-quantum channel is allowed to output entangled states between the inputs of the channels.

Just like for classical-quantum channels, 
any semantic secrecy code is also a strong secrecy code, and the strong secrecy capacity is a converse on the semantic secrecy capacity. Again, we only need the achievability proof.
The achievability of this rate follows directly from the achievability of the wiretap capacity for classical-quantum channels. Since the proof is actually independent of the structure of the secrecy criterion, the same proof for strong secrecy also works for semantic secrecy.

\begin{corollary}
\label{NbdettctiVw}
Let $W$ be a quantum channel. We have
\begin{equation} 
C_{sem}(W) =
  \sup_{n\in\mathbb{N}} \frac{1}{n} \sup_{\mathcal{X}, F, P}\left(
  \chi(P,F W^{\otimes n}) -\chi (P,F V^{\otimes n})\right)\text{ .}
\end{equation}
\end{corollary}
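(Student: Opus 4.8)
The plan is to reduce the fully quantum statement to the classical-quantum result (\Cref{thm:capacity}) by prepending a classical-quantum preprocessing channel to many copies of $W$ and its complementary channel $V$. Fix $n\in\mathbb N$, a finite set $\mathcal X$, a classical-quantum channel $F:\mathcal X\to\mathcal S(H_{\mathfrak P}^{\otimes n})$, and a distribution $P$ on $\mathcal X$. The composite $(FW^{\otimes n},FV^{\otimes n})$ is a classical-quantum wiretap channel with input alphabet $\mathcal X$, so \Cref{thm:capacity} applied to this classical-quantum wiretap channel gives that its semantic secrecy capacity equals $C_\textnormal{w}(FW^{\otimes n},FV^{\otimes n})$, which is at least $\chi(P;FW^{\otimes n})-\chi(P;FV^{\otimes n})$ by choosing the trivial encoder. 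More carefully, a semantic secrecy code for the $k$-fold use of the classical-quantum wiretap channel $(FW^{\otimes n},FV^{\otimes n})$ composed with $F^{\otimes k}$ becomes a quantum code for $W^{\otimes nk}$ with the same error and the same leakage against the environment, since $V^{\otimes nk}$ is a complementary channel of $W^{\otimes nk}$ and, as noted after \Cref{defofrateqqsem}, leakage does not depend on the choice of complementary channel. Taking $k\to\infty$ and then optimizing over $n,\mathcal X,F,P$ yields $C_\textnormal{sem}(W)\ge \sup_{n}\frac1n\sup_{\mathcal X,F,P}(\chi(P;FW^{\otimes n})-\chi(P;FV^{\otimes n}))$, which is exactly $C_\textnormal{strong}(W)$ by the multi-letter formula from~\cite{De}.

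For the converse, I would invoke that every semantic secrecy code is in particular a strong secrecy code (the maximum over $M$ in \Cref{b40qqsem} dominates the uniform choice in \Cref{b40qq}), so $C_\textnormal{sem}(W)\le C_\textnormal{strong}(W)$, and the latter equals the claimed multi-letter expression by~\cite{De}. Combining the two inequalities closes the proof. I would also remark explicitly — echoing the paragraph preceding the corollary — that the achievability argument of~\cite{De} never uses that the secrecy measure is evaluated on the uniform distribution, so one could alternatively prove achievability directly; but routing through \Cref{thm:capacity} is cleaner since it has already been established for classical-quantum channels in this paper.

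The main thing to be careful about is not an obstacle so much as a bookkeeping point: one must check that composing the preprocessing $F$ with $W^{\otimes n}$ and separately with $V^{\otimes n}$ is consistent with the complementary-channel structure, i.e.\ that $FV^{\otimes n}$ is genuinely a valid ``environment'' channel for the classical-quantum channel $FW^{\otimes n}$, so that applying \Cref{thm:capacity} (whose statement is about an arbitrary classical-quantum wiretap pair) is legitimate, and then that unfolding the code back to a quantum code for $W$ preserves the block length accounting ($n$ from $F$ times $k$ copies gives block length $nk$, and the rate $\frac1{nk}\log|\mathcal M|$ is preserved in the limit). Since \Cref{thm:capacity} is stated for an arbitrary classical-quantum wiretap channel $(W,V)$ without any complementarity assumption, this step goes through verbatim, and the leakage-invariance remark after \Cref{defofrateqqsem} handles the identification of the eavesdropper with the environment. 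I expect the entire argument to be short.
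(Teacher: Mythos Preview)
Your proposal is correct and follows essentially the same route as the paper: fix $n,\mathcal X,F,P$, apply the classical-quantum semantic secrecy result (\Cref{thm:capacity}) to the wiretap pair $(FW^{\otimes n},FV^{\otimes n})$, then compose the resulting encoder with $F^{\otimes k}$ to obtain a quantum code for $W$ of block length $nk$, and handle the converse via $C_\textnormal{sem}(W)\le C_\textnormal{strong}(W)$ together with~\cite{De}. The paper's proof is slightly terser and omits the bookkeeping remarks you make about complementarity and leakage invariance, but the argument is the same.
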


\begin{proof} 
We prove the claim for any fixed $n$, $\mathcal{X}$, $P$ and $F$, namely that 
\[\frac{1}{n} \qty
( \chi(P,F W^{\otimes n}) 
- \chi(P,F V^{\otimes n}))\]
is an achievable rate, the supremum then follows automatically.

Notice that $\chi(P,F W^{\otimes n}) - \chi(P,F V^{\otimes n})$ is already an achievable rate for the classical-quantum channel $(F W^{\otimes n},F V^{\otimes n})$, and thus for all $\epsilon>0$ and all $n'$ there exist an $(n', \chi(P,F W^{\otimes n}) - \chi(P,F V^{\otimes n}) -\epsilon,\epsilon)$  code $\qty{E_m,D_m}$ for $(F W^{\otimes n},F V^{\otimes n})$ as proven in the previous sections. It follows by construction and definition that $\qty{E_m F^{\otimes n'}, D_m}$ is a $(n'n, \chi(P,F W^{\otimes n}) - \chi(P,F V^{\otimes n}) -\epsilon,\epsilon)$ code for $W$ with rate divided by $n$.
\end{proof} 

Since we can reduce the classical semantic secrecy capacity of quantum channels to the one of classical-quantum wiretap channels, we can restrict ourselves to the latter in our analysis.

\bigskip
We have proven that whenever strong secrecy is achievable, 
then semantic security is also achievable. 
However, the proof technique does not tell us how to practically construct such codes, 
and the subset of semantically secure messages chosen in \Cref{thm:capacity} 
will in general depend on the channel and the code. 
In the next section we will address this issue and show how to construct such codes,
similar to how hash functions are used to achieve strong secrecy.

\section{Semantic Security with BRI Functions}
\label{sscocqwcBRI}
\label{sec:bri}

However,  the expurgation technique gives us only an existence statement, 
and does not answer the question of how to choose the semantically secure message subsets. 
In this Section we introduce BRI functions and use them to construct semantic secrecy capacity achieving BRI modular codes in \Cref{pdpaoimtcr}, thus also providing an alternative to the achievability proof of \Cref{thm:capacity} in the previous section. 
We will construct such codes requiring common randomness,
and will at first only show achievability via common-randomness BRI modular codes. An additional derandomization step will be  required to construct codes without common randomness.
The idea behind the construction of semantic-secrecy BRI modular codes is similar to the way in which strong secrecy codes are constructed, using first a transmission code to correct all the errors, but substituting the use of strongly universal hash functions with the use of BRI functions to erase the information held by the eavesdropper.
Just like hash functions, BRI functions require a random seed known to the sender and receiver, which is why we provide it as common randomness.
Providing the seed via common randomness makes construction easier and the proof conceptually clear. However, the assumption of common randomness as an additional resource is quite strong. In the end of the  section 
we prove that the random seed can be generated by the sender and be made known to the receiver using the channel without sacrificing capacity, a process known as derandomization.

An approach achieving the semantic secrecy rate using ``standard'' secure codes has been delivered in \cite{Hay_exp}, where the result of \cite{De} was extended. In \cite{Hay_exp}, it has been demonstrated how to obtain a semantic secure code from a ``standard'' secure code by using a hash function,  when the ``standard'' secure code is a linear code and the channel is an additive channel. 
\cite{Hay_exp} and \cite{Ha13} extended this technique to
 deliver an explicit construction of a secrecy transmission code
with strong secrecy for general classical channel. In additional, 
this method  also work for a continuous input alphabet. The
same technique  can be easily extended to  a classical-quantum
code (cf. \cite{Ha14}).
This technique has been also applied  in \cite{Ha15} for additive fully-quantum channels, when the eavesdropper has access to the whole environment.
Together with the expurgation technique (cf. \Cref{sscocqwc}),
these results deliver an extend proof for \Cref{thm:capacity}.

Our results, both for classical-quantum channels and for fully quantum channels with eavesdropper having access to the whole environment, 
are more general, since our techniques 
deliver an explicit construction of a secrecy transmission code
with semantic secrecy using {\bf any code} ensuring strong
security on any wiretap channel.

To our knowledge, we are the first to show that {\bf every} code with or without common randomness which achieves strong secrecy has a subcode with the same asymptotic rate which achieves semantic secrecy measured in terms of Holevo information. Thus, the passing from the strongly secret code to the semantically secret subcode is highly nonconstructive. An important aspect of THIS modular construction using BRI functions is that there is hope that it can be implemented in practice. 

\subsection{BRI functions}

We will define define what biregular irreducible (BRI) functions are in this subsection, 
and prove the key properties that we will use to achieve semantic security.
The properties we will prove are independent of communication problems
like the classical-quantum wiretap channels we consider.
They are simply related to the structure of BRI functions 
and how they are used as an input to classical-quantum channels.
Thus the channels and the input spaces in this subsection
are not to be confused with the actual wiretap channel
and its inputs, as will be made clearer below.

We will be looking at families of functions $f_s(x)$,
namely functions of two inputs $f: \mathcal{S}\times\mathcal{X}\to\mathcal{N}$, 
and at their preimages in $\mathcal{X}$
\begin{equation*}
    f_s^{-1}(m)\coloneqq \qty{x\in\mathcal{X}: f_s(x)=m},
\end{equation*}
where $s\in\mathcal{S}$ will be the seeds in construction of the modular codes.

\begin{definition}[Biregular functions]
Let $\mathcal{S}$, $\mathcal{X}$, $\mathcal{N}$ be finite sets.
A function $f: \mathcal{S}\times\mathcal{X}\to\mathcal{N}$ 
is called biregular if there exists a regularity set $\mathcal{M}\subseteq\mathcal{N}$ such that for every $m\in\mathcal{M}$ 
\begin{enumerate}[label=\alph*)]
    \item \label{bri:sregular}
    $d_{\mathcal{S}}\coloneqq |\{x: f_s(x) = m\}|= |f_s^{-1}(m)|$ 
    is non-zero and independent of $s$;
    \item \label{bri:xregular}
    $d_{\mathcal{X}}\coloneqq|\{s: f_s(x) = m\}|$ 
    is non-zero and independent of $x$.
\end{enumerate}
\end{definition}
For any biregular function $f: \mathcal{S}\times\mathcal{X}\to\mathcal{N}$ and any $m\in\mathcal{M}$
we can define a doubly stochastic matrix $P_{f,m}$~\cite{Wi/Bo} with coefficients defined as
    \begin{align}
    \label{bri:stochastic-matrix}
    P_{f,m}(x,x') &\coloneqq
        \frac{1} {d_{\mathcal{S}}d_{\mathcal{X}}}
        \cdot {|\{s: f_s(x)= f_s(x') =m\}|}.
    \end{align}
In other words, $P_{f,m}(x,x')$ is the normalized number of seeds $s\in\mathcal{S}$ such that both $x$ and $x'$ are in the preimage $f_s^{-1}(m)$.
Since $P_{f,m}$ is stochastic, its largest eigenvalue is $1$ and we define $\lambda_2(f,m)$ to be the second largest singular value of $P_{f,m}$.

\begin{definition}[BRI functions]
\label{afsxnicabri}
Let $\mathcal{S}$, $\mathcal{X}$, $\mathcal{N}$ be finite sets.
A biregular function $f: \mathcal{S}\times\mathcal{X}\to\mathcal{N}$ 
is called irreducible if
    \label{bri:eigenvalue}
    $1$ is a simple eigenvalue of $P_{f,m}$, namely if $\lambda_2(f,m)< 1$, for every $m\in\mathcal{M}$.
\end{definition}
Notice that $d_{\mathcal{S}}$ and $d_{\mathcal{X}}$ might depend on $m$.
However, for the known BRI function construction 
these are indeed a constant parameter~\cite{Wi/Bo}.

Biregularity puts a strong restriction on the behaviour of the function. 
In particular, any $m$ is a possible output of any $s$ or $x$,
with the right $(s,x)$ pair. 
If for a fix $m$ we consider the incidence matrix $I_{sx} = \delta_{m,f_s(x)}$,
which we can think of it as representing the $m$ section of the graph of $f_s(x)$, 
then we can visualize \cref{bri:xregular,bri:sregular} as $I_{sx}$ 
having the same number of $1$'s in each row, and similarly in each column.
For example, ignoring \cref{bri:eigenvalue} and omitting the zeros, 
a possible $I_{sx}$ for a given $m$ might look like 
\begin{gather*}
\mathcal{X}
\\\mathcal{S}
\begin{pmatrix}
1& & &1&1&1& & \\
 &1&1& & &1& &1\\
1&1& &1& &1& & \\
 & &1& &1& &1&1\\
1&1& &1& & &1& \\
 & &1& &1& &1&1
\end{pmatrix}
\end{gather*}
with $d_{\mathcal{S}}=4$ and $d_{\mathcal{X}} = 3$.
An important consequence is the following relation
\begin{equation}
d_{\mathcal{X}}|\mathcal{X}|=d_{\mathcal{S}}|\mathcal{S}|,
\label{bri:sizes}
\end{equation}
easily derived from 
\begin{align*}
|\{(s,x): f_s(x) = m\}| 
&= \sum_{s} |\qty{x: f_s(x) = m}|
 = \sum_s d_{\mathcal{S}}
 = d_{\mathcal{S}} |\mathcal{S}|
\\
&= \sum_{x} |\qty{s: f_s(x) = m}|
 = \sum_x d_{\mathcal{X}}
 = d_{\mathcal{X}} |\mathcal{X}|.
\end{align*}
In~\cite{Wi/Bo} it was shown that for every $d \geq 3$ and $k\in\mathbb{N}$, there exists a BRI function $f:\mathcal{S}\times\mathcal{S}\to\mathcal M$, satisfying $|\mathcal{S}|=2^kd$, $|\mathcal{M}|=2^k$ and 
\begin{equation}
    \lambda_2(f,m) \leq \frac{4}{d}.
\end{equation}
For our constructions, the above is all we need to know, 
we will not need to know how these functions are constructed.

The above are key features that are used to provide security.
BRI functions play the equivalent role of hash functions for strong secrecy,
so we need to show how they can be used to reduce the 
Holevo information at the output of the channel,
which is what we will do in the remainder of the subsection.
This channel must not be confused with the wiretap,
which we will consider only later in the next subsection.
For the rest of this subsection we fix a finite set $\mathcal{X}$,
a quantum system $H$, a classical-quantum channel
\begin{equation*}
    V: \mathcal{X}\to \mathcal{S}(H),
\end{equation*}
where $V$ is not necessarily a wiretap, and the random variables
\begin{align*}
    S&&&\text{uniform random variable over the seed }\mathcal{S}
    \\
    M&&&\text{random variable over the message }\mathcal{M} 
\end{align*}
which are always assumed to be independent.
In the next subsection, $V$ will actually be the composition 
of the encoder of the transmission code with the actual wiretap;
thus $\mathcal{X}$ will be the message space of the transmission code.
The space $\mathcal{M}$ will be the message space of the wiretap code
and $\mathcal{S}$ the space of the common randomness.
Given a seed $s$, the encoding of a message $m$ happens 
by picking an uniformly random element of the preimage $f_s^{-1}(m)$.
The definition of BRI functions and these conditions, then, are such that fixing the message and choosing the seed at random 
produces a uniformly random encoding, as will be explained now more precisely.
For this pupose, with some abuse of notation,
we will allow classical-quantum channels  to take subsets as inputs, 
with the convention that the resulting state
is the uniform mixture over the outputs of the elements in the set.
Namely, for $\mathcal{D}\subseteq \mathcal {X}$ we will define
\begin{align*}
V(\mathcal{D}) &\coloneqq \frac{1}{\mathcal{|D|}} \sum_{x\in\mathcal{D}} V(x),
\intertext{in particular, the above defines }
V\circ f_s^{-1}(m) &= \frac{1}{d_{\mathcal{S}}}\sum_{x\in f_s^{-1}(m)}V(x),\\
V(\mathcal{X}) &= \frac{1}{\mathcal{|X|}} \sum_{x\in\mathcal{X}} V(x),
\end{align*}
which we will repeatedly use.
In particular, from \Cref{bri:sizes}, stating that 
$d_{\mathcal{X}}|\mathcal{X}|=d_{\mathcal{S}}|\mathcal{S}|$,
it follows immediately that 
\begin{equation}
     P[f_S^{-1}(m)=x]
    =\frac{1}{\lvert\mathcal{S}\rvert}\sum_s\frac{1}{d_{\mathcal{S}}}1_{\{f_s(x)=m\}}
    =\frac{d_{\mathcal{X}}}{d_{\mathcal{S}}\lvert\mathcal{S}\rvert}
    =\frac{1}{\lvert\mathcal{X}\rvert},
\end{equation}
and thus for any $m\in\mathcal{M}$
\begin{equation}
    \label{eq:seed-uniformx}
    \mathbb{E}_S V\circ f_S^{-1}(m) = V(\mathcal{X}),
\end{equation}
which means that not knowing the seed makes the output independent of the message.


\smallskip
We can now start bounding the information at the output of the channel $V$, 
and ultimately will need to be able to show the semantic secrecy conditions.
As said, we focus for now on the common-randomness semantic security,
\Cref{eq:seccrit_cr} of \Cref{def:semsec-cr}, 
which means bounding the leakage
defined in \Cref{eq:leakage-cr} of \Cref{def:code-cr}.
We do this in general for the output of any channel, 
irrespective of actual encodings, and therefore will upper bound
a general leakage $\chi\qty(M;S,V\circ f_S^{-1})$.
We begin by converting the leakage 
from a Holevo quantity to a relative entropy.

\begin{lemma}
\label{lem:divasmi}
For any random variable $M$ over $\mathcal{M}$ 
independent of the uniform seed $S$, 
it holds
\begin{align*}
\chi(M;S,V\circ f_S^{-1}) \leq
\max_{m\in\mathcal{M}} \mathbb{E}_S D\qty( V\circ f_S^{-1}(m) \middle\| V(\mathcal{X})).
\end{align*}
\end{lemma}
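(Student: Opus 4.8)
The plan is to reduce everything to a single fixed seed, then replace the optimal reference state in the Holevo quantity (the seed-conditional output average) by the fixed state $V(\mathcal{X})$, and finally average over the seed and bound the resulting convex combination over $m$ by its maximum. The natural choice of $V(\mathcal{X})$ as reference is dictated by \Cref{eq:seed-uniformx}, which says $\mathbb{E}_S V\circ f_S^{-1}(m)=V(\mathcal{X})$ for every $m$, so $V(\mathcal{X})$ is ``close'' to each conditional average in the relevant sense.

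First I would invoke \Cref{eq:leakage-conditioning}: since $M$ and $S$ are independent,
\[
\chi(M;S,V\circ f_S^{-1})=\mathbb{E}_S\chi(M;V\circ f_S^{-1})=\frac{1}{\lvert\mathcal{S}\rvert}\sum_{s\in\mathcal{S}}\chi(M;V\circ f_s^{-1}),
\]
so it suffices to bound $\chi(M;V\circ f_s^{-1})$ for each fixed $s$. Fix $s$ and set $\sigma_s\coloneqq\sum_m P_M(m)\ketbra{m}\otimes V\circ f_s^{-1}(m)$, $\mu\coloneqq\sum_m P_M(m)\ketbra{m}$, and $\bar{\rho}_s\coloneqq\mathbb{E}_M V\circ f_s^{-1}(M)$, the second marginal of $\sigma_s$. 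By \Cref{eq:holevo-relent} applied with input distribution $P_M$ and channel $m\mapsto V\circ f_s^{-1}(m)$, we have $\chi(M;V\circ f_s^{-1})=D(\sigma_s\parallel\mu\otimes\bar{\rho}_s)$. Expanding the logarithms of the product states (and using that the first marginal of $\sigma_s$ is exactly $\mu$) gives the identity
\[
D\qty(\sigma_s\parallel\mu\otimes V(\mathcal{X}))-D\qty(\sigma_s\parallel\mu\otimes\bar{\rho}_s)=D\qty(\bar{\rho}_s\parallel V(\mathcal{X}))\geq 0,
\]
hence $\chi(M;V\circ f_s^{-1})\leq D(\sigma_s\parallel\mu\otimes V(\mathcal{X}))=\sum_m P_M(m)\,D\qty(V\circ f_s^{-1}(m)\parallel V(\mathcal{X}))$ by \Cref{eq:holevo-avgrelent}-type expansion.

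Averaging this over $s$ and exchanging the two sums yields
\[
\chi(M;S,V\circ f_S^{-1})\leq\sum_{m\in\mathcal{M}}P_M(m)\,\mathbb{E}_S D\qty(V\circ f_S^{-1}(m)\parallel V(\mathcal{X}))\leq\max_{m\in\mathcal{M}}\mathbb{E}_S D\qty(V\circ f_S^{-1}(m)\parallel V(\mathcal{X})),
\]
the last inequality because a convex combination never exceeds its largest term. I do not expect a genuine obstacle: the argument is essentially the statement that the Holevo quantity is the smallest average relative entropy to a common reference state. The only point requiring care is the support condition in the definition of the relative entropy — but $f_s^{-1}(m)\subseteq\mathcal{X}$ forces $\operatorname{supp}\qty(V\circ f_s^{-1}(m))\subseteq\operatorname{supp}\qty(V(\mathcal{X}))$, which makes $D(\bar{\rho}_s\parallel V(\mathcal{X}))$ finite and the displayed identity legitimate; it also keeps the right-hand side of the claimed bound finite, so the inequality is nontrivial.
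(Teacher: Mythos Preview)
Your proof is correct and is essentially the same argument as the paper's, just organized differently: the paper uses the chain rule $\chi(M;S,V\circ f_S^{-1})=\chi(M,S;V\circ f_S^{-1})-\chi(S;V\circ f_S^{-1})\le\chi(M,S;V\circ f_S^{-1})$ and then writes the last quantity via \Cref{eq:holevo-avgrelent} as an expected relative entropy to the global output marginal, which by \Cref{eq:seed-uniformx} equals $V(\mathcal{X})$; you instead condition on $S$ first and, for each seed, replace the optimal reference state $\bar\rho_s$ by $V(\mathcal{X})$ using the variational identity. The nonnegative term discarded is identical in both arguments, since $\chi(S;V\circ f_S^{-1})=\mathbb{E}_S D(\bar\rho_S\parallel V(\mathcal{X}))$.
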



\begin{proof}

Denote $\sigma_{s,m}:= V\circ f_s^{-1} (m)$.
We have
\begin{align*}
  &  \chi(M;S,V\circ f_S^{-1}) \\
&= \frac{1}{\mathcal{S}} \sum_s  S\left(\frac{1}{\mathcal{M}}\sum_m \sigma_{s,m}\right)
    - \frac{1}{\mathcal{S}}\frac{1}{\mathcal{M}}\sum_s \sum_m S\left( \sigma_{s,m}\right)
    \\
       &=S\left( \frac{1}{\mathcal{S}}  \frac{1}{\mathcal{M}}\sum_s \sum_m \sigma_{s,m}\right)
    - \frac{1}{\mathcal{S}}\frac{1}{\mathcal{M}}\sum_s \sum_m S\left( \sigma_{s,m}\right) - \chi(S;V\circ f_S^{-1})
    \\
    &= \chi(M,S;V\circ f_S^{-1})
    - \chi(S;V\circ f_S^{-1})\end{align*}
therefore,  
\begin{align*}
    \chi(M;S,V\circ f_S^{-1}) 
    &\leq \chi(M,S;V\circ f_S^{-1})\\
    &=  \mathbb{E}_M \mathbb{E}_S D\qty( V\circ f_S^{-1}(M) \middle\| 
        \mathbb{E}_M \mathbb{E}_S V\circ f_S^{-1}(M))\\
    &\leq \max_{m\in\mathcal{M}} \mathbb{E}_S D\qty( V\circ f_S^{-1}(m) \middle\| 
        \mathbb{E}_M \mathbb{E}_S V\circ f_S^{-1}(M)).
\end{align*} the equality holds because of \Cref{eq:holevo-avgrelent}, where we consider the 
classical-quantum channel $\mathcal{S}\times \mathcal{X}\rightarrow \mathcal{S}(H')$ that maps $s,m \to V\circ f_s^{-1} (m)$.
It only remains to show that 
\begin{equation*}
\mathbb{E}_M \mathbb{E}_S V\circ f_S^{-1}(M) = V(\mathcal{X}),
\end{equation*}
but this follows immediately from \Cref{eq:seed-uniformx},
namely from $\mathbb{E}_S V\circ f_S^{-1}(m) = V(\mathcal{X})$,
and the proof is concluded.
\end{proof}

For the next step we will define subnormalized classical-quantum channels. 
Later we will project onto the typical subspace,
and discard the rest.

\begin{definition}Let $\epsilon \geq 0$.
An $\epsilon$-subnormalized classical-quantum channel
$V': \mathcal{X}\to\mathcal{L}({\cal H})$
is a map
satisfying $V'(x)\geq 0$ and $1-\epsilon\leq \tr V'(x) \leq 1$, for all $x\in\mathcal{X}$.
Since for $\epsilon>\epsilon'$, an $\epsilon$-subnormalized channel is also $\epsilon'$-subnormal,
we call all $0$-subnormalized classical-quantum channels simply subnormal.
\\Now let $V:\mathcal{X}\to\mathcal{S}({\cal H})$ be a classical-quantum channel. Let  $V':\mathcal{X}\to\mathcal{L}({\cal H})$ be a subnormalized classical-quantum channel.
We say that 
\begin{equation*}
    V'\leq V
\end{equation*}
if $V'(x)\leq V(x)$ for all $x\in\mathcal{X}$.
\label{qnocsig4'2}
\end{definition}

The ordering definition reflects what we obtain when we project
a channel on a subspace, we obtain a subnormalized channel that is
less than the original channel in an operator ordering sense.
When we project onto the typical subspace, we only change the channel a little, 
and we want to make sure that our upper bound only changes a little.
This is the statement of the next lemma.

\begin{lemma}\label{lem:pass_to_subnormalization}Let $V:$ $\mathcal{X}$ $\to \mathcal{S}({\cal H})$ be a classical-quantum channel.
Let $\epsilon>0$. Let $V': \mathcal{X}\to\mathcal{L}({\cal H})$ 
be an $\epsilon$-subnormalized classical-quantum channel such that $V'\leq V$.
Then for any fixed $m\in\mathcal{M}$, it holds
\begin{equation*}
\mathbb{E}_S D\left(V \circ f_S^{-1} (m) \parallel V (\mathcal{X}) \right)\leq
\mathbb{E}_S D\left(V'\circ f_S^{-1} (m) \parallel V'(\mathcal{X}) \right)
+\epsilon\log\frac{|\mathcal{X}|}{d_{\mathcal{S}}}\text{ .}
\end{equation*}
\end{lemma}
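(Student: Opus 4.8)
The plan is to lift $V$ and $V'$ to a single \emph{genuine} classical--quantum channel on the doubled space $\mathcal H\oplus\mathcal H$, and then play the data-processing inequality off against the additivity of the relative entropy on block-diagonal operators. Define $\tilde V\colon\mathcal X\to\mathcal S(\mathcal H\oplus\mathcal H)$ by $\tilde V(x)\coloneqq V'(x)\oplus\bigl(V(x)-V'(x)\bigr)$; this is a bona fide channel, since $V'\le V$ forces $V(x)-V'(x)\ge 0$ and $\tr\tilde V(x)=\tr V'(x)+\bigl(\tr V(x)-\tr V'(x)\bigr)=1$. Writing $c\coloneqq|\mathcal X|/d_{\mathcal S}$, $\Delta_s\coloneqq(V-V')\circ f_s^{-1}(m)$ and $\Delta\coloneqq V(\mathcal X)-V'(\mathcal X)$, linearity gives $\tilde V\circ f_s^{-1}(m)=\bigl(V'\circ f_s^{-1}(m)\bigr)\oplus\Delta_s$ and $\tilde V(\mathcal X)=V'(\mathcal X)\oplus\Delta$, and both are density operators because $\tilde V$ is a channel.

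The argument then has three steps. (i) \emph{Data processing.} The completely positive trace-preserving map $\Phi\colon\mathcal L(\mathcal H\oplus\mathcal H)\to\mathcal L(\mathcal H)$, $\Phi(A\oplus B)=A+B$ (Kraus operators the two coordinate projections $\mathcal H\oplus\mathcal H\to\mathcal H$), satisfies $\Phi\circ\tilde V=V$, hence $\Phi(\tilde V\circ f_s^{-1}(m))=V\circ f_s^{-1}(m)$ and $\Phi(\tilde V(\mathcal X))=V(\mathcal X)$; monotonicity of the relative entropy under $\Phi$ gives $D\bigl(V\circ f_s^{-1}(m)\,\|\,V(\mathcal X)\bigr)\le D\bigl(\tilde V\circ f_s^{-1}(m)\,\|\,\tilde V(\mathcal X)\bigr)$ for every $s$. (ii) \emph{Block splitting.} Since both states are block-diagonal for $\mathcal H\oplus\mathcal H$, $D\bigl(\tilde V\circ f_s^{-1}(m)\,\|\,\tilde V(\mathcal X)\bigr)=D\bigl(V'\circ f_s^{-1}(m)\,\|\,V'(\mathcal X)\bigr)+D(\Delta_s\,\|\,\Delta)$. (iii) \emph{Bounding the error term.} Keeping only the summands with $x\in f_s^{-1}(m)$ in $\Delta=\tfrac1{|\mathcal X|}\sum_x(V(x)-V'(x))$ and using $V(x)-V'(x)\ge 0$ yields the operator domination $\Delta_s\le c\,\Delta$, which in particular gives $\operatorname{supp}\Delta_s\subseteq\operatorname{supp}\Delta$; hence $D(\Delta_s\,\|\,c\Delta)\le 0$, and by the scaling identity $D(\Delta_s\,\|\,\Delta)=D(\Delta_s\,\|\,c\Delta)+(\tr\Delta_s)\log c\le(\tr\Delta_s)\log c$. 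Averaging over $S$ and using $\mathbb E_S\,V\circ f_S^{-1}(m)=V(\mathcal X)$ from \Cref{eq:seed-uniformx} and its analogue for $V'$, one gets $\mathbb E_S\tr\Delta_S=\tr\Delta=1-\tr V'(\mathcal X)\le\epsilon$ (because $\tr V'(x)\ge1-\epsilon$ for every $x$), so $\mathbb E_S D(\Delta_S\,\|\,\Delta)\le\epsilon\log\frac{|\mathcal X|}{d_{\mathcal S}}$ since $\log c\ge 0$.

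Assembling (i)--(iii) and averaging over $S$ gives exactly $\mathbb E_S D\bigl(V\circ f_S^{-1}(m)\,\|\,V(\mathcal X)\bigr)\le\mathbb E_S D\bigl(V'\circ f_S^{-1}(m)\,\|\,V'(\mathcal X)\bigr)+\epsilon\log\frac{|\mathcal X|}{d_{\mathcal S}}$. I do not expect a serious obstacle: the only channel-specific inputs are the averaging identity \Cref{eq:seed-uniformx} and the domination $\Delta_s\le\frac{|\mathcal X|}{d_{\mathcal S}}\Delta$, both immediate from biregularity, while everything else (block-diagonal additivity, data processing, $D(A\|B)\le 0$ for $0\le A\le B$, and the scaling of $D$ under $\sigma\mapsto c\sigma$) is standard and robust to the operators being singular; the one point to watch is that $\Delta_s\le c\Delta$ is precisely what keeps $D(\Delta_s\|\Delta)$ finite. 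If anything, the ``difficulty'' is conceptual rather than technical: spotting the doubling trick $\tilde V=V'\oplus(V-V')$, which is what turns the subnormalized comparison into an honest one.
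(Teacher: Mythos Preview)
Your proof is correct and follows essentially the same route as the paper: your direct-sum lift $\tilde V=V'\oplus(V-V')$ together with the collapse map $\Phi(A\oplus B)=A+B$ is exactly the paper's classical-flag construction $\ketbra{0}\otimes\rho+\ketbra{1}\otimes\rho'$ together with the partial trace over the flag, and both then bound $D(\Delta_s\|\Delta)$ via the domination $\Delta_s\le\frac{|\mathcal X|}{d_{\mathcal S}}\Delta$ and operator monotonicity of the logarithm. The only cosmetic difference is that the paper bounds $\tr\Delta_s\le\epsilon$ pointwise in $s$ (each summand in the uniform average over $f_s^{-1}(m)$ has trace at most $\epsilon$), whereas you average first and invoke \Cref{eq:seed-uniformx}; both are equally valid.
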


\begin{proof}
Let $\rho$, $\rho'$, $\sigma$, and $\sigma'$ 
be subnormalized quantum states on the same system
such that the sum is a normalized state, 
namely so that $\tr(\rho+\rho')= \tr(\sigma+\sigma') =1$.
Consider classical-quantum states of the  form
$\ketbra{0} \otimes \rho + \ketbra{1} \otimes \rho'$ 
and $\ketbra{0} \otimes \sigma + \ketbra{1} \otimes \sigma'$.
By monotonicity of the relative entropy (cf. \Cref{BNaD}) under the trace, we note that
\begin{align*}
D&(\rho + \rho'\parallel  \sigma + \sigma')
\\&\leq 
D(\ketbra{0} \otimes \rho   + \ketbra{1} \otimes \rho' \parallel 
  \ketbra{0} \otimes \sigma + \ketbra{1} \otimes \sigma')
\\&= D(\rho \parallel \sigma )
   + D(\rho'\parallel \sigma').
\end{align*}

We define $V^\Delta\coloneqq V-V'$ and apply the above to
\begin{align*}
    \rho &\coloneqq V'\circ f_s^{-1} (m)
    &
    \rho'&\coloneqq V^\Delta\circ f_s^{-1} (m)
    \\
    \sigma &\coloneqq V' (\mathcal{X})
    &
    \sigma'&\coloneqq V^\Delta (\mathcal{X}),
\end{align*}
which satisfy $ V\circ f_s^{-1} (m)=\rho +\rho'$
and $V(\mathcal{X})= \sigma + \sigma'$, and obtain
\begin{align}
  &D\left(V\circ f_s^{-1} (m) \parallel V( \mathcal{X} ) \right) \nonumber
\\&\leq D\left(V'\circ f_s^{-1} (m) \parallel V'( \mathcal{X} ) \right)
+ D\left(V^{\Delta}\circ f_s^{-1} (m) \parallel V^{\Delta}( \mathcal{X} ) \right).
\label{eq:cqchannel-logsum}
\end{align}
Notice that 
\begin{align*}
    V^{\Delta}\circ f_s^{-1} (m)
    &=\frac{1}{d_{\mathcal{S}}}\sum_{{x\in f_s^{-1}(m)}} V^{\Delta}(x)
    \leq \frac{1}{d_{\mathcal{S}}} \sum_{x\in \mathcal{X}} V^{\Delta}(x)
    = \frac{|\mathcal{X}|}{d_{\mathcal{S}}} V^{\Delta}(\mathcal{X}).
\end{align*} 
This implies that $supp(V^{\Delta}\circ f_s^{-1}(m))$ $\subset supp(V^{\Delta}(\mathcal{X}))$
and by operator monotonicity of the logarithm~(cf.~\cite{Bh}), that $\log(V^{\Delta}\circ f_s^{-1}(m))$
$\leq \log(V^{\Delta}(\mathcal{X}))$.
Thus
\allowdisplaybreaks
\begin{align*}&
D\qty(V^{\Delta}\circ f_s^{-1} (m) \parallel V^{\Delta}(\mathcal{X}))\\
&=\tr[ V^{\Delta}\circ f_s^{-1} (m) \cdot 
\qty(\log V^{\Delta}\circ f_s^{-1} (m) 
    -\log V^{\Delta}(\mathcal{X}) )] \\
&\leq\tr[ V^{\Delta}\circ f_s^{-1} (m) \cdot 
\qty(\log \qty[ \frac{|\mathcal{X}|}{d_{\mathcal{S}}} V^{\Delta}( \mathcal{X})] 
    -\log V^{\Delta}(\mathcal{X}) )] \\
&=\tr[ V^{\Delta}\circ f_s^{-1} (m) \cdot \log\frac{|\mathcal{X}|}{d_{\mathcal{S}}} ]\\
&\leq \epsilon\log\frac{|\mathcal{X}|}{d_{\mathcal{S}}},
\end{align*}
where we used that $\tr V^{\Delta}\circ f_s^{-1} (m) \leq \epsilon$.
Plugging this into \Cref{eq:cqchannel-logsum}, we obtain the claim.
\end{proof}

\begin{lemma}
\label{lem:pass-to-renyi}
Let $\epsilon>0$ and let  $V': \mathcal{X}\to\mathcal{S}({\cal H})$ 
be an $\epsilon$-subnormalized classical-quantum channel.
For any fixed $m\in\mathcal{M}$, it holds
\begin{align*}
\mathbb{E}_S D\qty(V'\circ f_S^{-1} (m) \middle\| V'(\mathcal{X}))
&\leq
\log \mathbb{E}_S \exp\qty( D_2\qty(V'\circ f_S^{-1}(m)\middle\| V'(\mathcal{X})) ) +\epsilon.
\end{align*}
\label{d2vfmvux}
\end{lemma}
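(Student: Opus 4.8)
The plan is to relate the ordinary (Umegaki) relative entropy to the R\'enyi relative entropy of order $2$ via a Jensen-type argument applied to the expectation over the seed $S$. The key identity is that for subnormalized $\rho\leq$ some ambient density and reference $\sigma$, one has $D(\rho\parallel\sigma)=\tr\rho\,(\log\rho-\log\sigma)$, while $D_2(\rho\parallel\sigma)=\log\tr(\rho^2\sigma^{-1})$ (on the support of $\sigma$). The quantity $\exp(D_2(\rho\parallel\sigma))=\tr(\rho^2\sigma^{-1})$ is, up to the trace defect of $\rho$, an ``averaged'' version of $2^{\log\rho-\log\sigma}$ against $\rho$, and the concavity of the logarithm will let me pull the expectation $\mathbb{E}_S$ inside.

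Concretely, I would first fix $m$ and abbreviate $\rho_s\coloneqq V'\circ f_s^{-1}(m)$ and $\sigma\coloneqq V'(\mathcal{X})$; note $\sigma$ does not depend on $s$, and $\mathbb{E}_S\rho_s$ need not equal $\sigma$ here (that was only true for the full channel $V$), but that is fine because the statement only compares each $\rho_s$ to $\sigma$. For each $s$, I would bound $D(\rho_s\parallel\sigma)\leq\log\tr\big(2^{\log\rho_s-\log\sigma}\rho_s\big)+ (\text{correction})$; the cleanest route is to use that for a subnormalized state the Umegaki relative entropy is dominated by the Petz/sandwiched R\'enyi-$2$ divergence plus a term accounting for $\tr\rho_s\leq 1$, i.e. $D(\rho_s\parallel\sigma)\leq D_2(\rho_s\parallel\sigma)+(1-\tr\rho_s)\cdot(\dots)$. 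Since $V'$ is $\epsilon$-subnormalized, $1-\tr\rho_s\leq\epsilon$, and I expect this to produce exactly the additive $+\epsilon$ in the claim after checking that the logarithmic factor multiplying $(1-\tr\rho_s)$ is at most $1$ in the relevant normalization (this is where I must be careful: the reference $\sigma=V'(\mathcal{X})$ is itself subnormalized with trace close to $1$, so $-\log\sigma\geq 0$ up to a small shift, and the bookkeeping has to be done honestly). Then I take $\mathbb{E}_S$ of both sides: the left side is what we want, and on the right side I apply Jensen's inequality to the concave function $\log$, giving $\mathbb{E}_S D_2(\rho_S\parallel\sigma)\leq \mathbb{E}_S\log\big(\exp D_2(\rho_S\parallel\sigma)\big)\leq\log\mathbb{E}_S\exp\big(D_2(\rho_S\parallel\sigma)\big)$, which is precisely the first term on the right-hand side of the lemma.

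The main obstacle I anticipate is the step bounding $D(\rho\parallel\sigma)$ by $D_2(\rho\parallel\sigma)$ for subnormalized $\rho$ with the correct $\epsilon$-dependent correction: the naive monotonicity $D_\alpha\leq D_{\alpha'}$ for $\alpha\leq\alpha'$ quoted earlier in the paper is stated for density operators, and $\rho_s$ is only subnormalized, so I need either to renormalize $\rho_s$ (introducing a $\log(1/\tr\rho_s)$ term that must be controlled by $\epsilon$) or to prove the inequality directly from the integral/variational representation of $D$. A clean way is: write $\hat\rho_s\coloneqq\rho_s/\tr\rho_s$, use $D(\hat\rho_s\parallel\sigma)\leq D_2(\hat\rho_s\parallel\sigma)$ from the $\alpha$-monotonicity, then unpack $D(\rho_s\parallel\sigma)=\tr\rho_s\cdot D(\hat\rho_s\parallel\sigma)+\tr\rho_s\log\tr\rho_s\leq D(\hat\rho_s\parallel\sigma)\leq D_2(\hat\rho_s\parallel\sigma)$ (using $\tr\rho_s\leq1$ and $\log\tr\rho_s\leq0$), and finally relate $D_2(\hat\rho_s\parallel\sigma)$ back to $D_2(\rho_s\parallel\sigma)$, which shifts by $-2\log\tr\rho_s\geq0$; this shift, together with the conversion $\exp D_2$, is where the $+\epsilon$ slack is spent, using $-\log\tr\rho_s\leq-\log(1-\epsilon)\leq\epsilon/(1-\epsilon)$ and absorbing constants. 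Once this per-$s$ inequality is in hand, the concavity-of-$\log$ step is routine and the proof closes.
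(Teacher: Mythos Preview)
Your approach is essentially the same as the paper's: bound $D(\rho_s\parallel\sigma)$ by $D_2(\rho_s\parallel\sigma)$ plus an $O(\epsilon)$ correction coming from the trace defect, then apply Jensen's inequality (concavity of $\log$) to pass from $\mathbb{E}_S D_2$ to $\log\mathbb{E}_S\exp D_2$. The paper normalizes both arguments, writing $\rho_s=p\rho$ and $\sigma=q\sigma'$ with $\rho,\sigma'$ states, uses $D\leq D_2$ on the normalized pair, and then rewrites back to obtain $D(p\rho\parallel q\sigma)\leq D_2(p\rho\parallel q\sigma)+(1-p)$ via $-p\log p\leq 1-p$; your route of normalizing only $\rho_s$ yields the correction $-2\log\tr\rho_s$, which is $O(\epsilon)$ but not literally $\epsilon$, so your ``absorbing constants'' caveat is honest---you prove the lemma with $+c\epsilon$ for a harmless constant $c$, which is all that is used downstream.
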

\Cref{d2vfmvux} is just the quantum version of Lemma 23 in~\cite{Wi/Bo}
and can be shown by similar techniques.
For the sake of completeness, we deliver a proof here.

\begin{proof} 
By the ordering relation and the convergence of $\alpha$-R\'enyi relative entropy 
for quantum states we mentioned in \Cref{BNaD}, 
it holds that $D(\rho\parallel\sigma) \leq  D_2(\rho\parallel\sigma)$,
and thus we can bound any relative entropy term $D(p\rho\parallel q\sigma)$,
where $p$ and $q$ are probabilities, and $\rho$ and $\sigma$ are states such that $supp(\rho) \subset supp(\sigma)$
with the $2$-R\'enyi  relative entropy as follows (notice that \Cref{eq:DtoD2} holds trivially if $supp(\rho)$ is not in $supp(\sigma)$):
{\allowdisplaybreaks
\begin{align*}
D(p\rho\parallel q\sigma)
  &= p D(\rho\parallel\sigma) + p \log \frac p q
\\&\leq p D_2(\rho\parallel\sigma) + p \log \frac p q
\\&= p \tr \log[\rho^2\sigma^{-1}] + p \log \frac p q
\\&= p \tr \log[(p\rho)^2(q\sigma)^{-1}] - p \log p
\\&= p D_2(p\rho\parallel q\sigma) + (1-p)
\\&\leq D_2(p\rho\parallel q\sigma) + (1-p),
\yesnumber
\label{eq:DtoD2}
\end{align*}
where we used $-p\log p \leq 1-p$.
}

We apply this to 
\begin{align*}
    p\rho &\coloneqq V'\circ f_s^{-1} (m)
    &
    q\sigma &\coloneqq V'(\mathcal{X})
\end{align*}
which gives 
\begin{align*}
    1-p &= 1 - \tr V'\circ f_s^{-1} (m) 
        \leq \epsilon 
\end{align*}
and thus
\begin{align*}
D(V'\circ f_s^{-1} (m) \parallel V'(\mathcal{X}))
&\leq D_2\qty(V'\circ f_s^{-1}(m)\parallel  V'(\mathcal{X}) ) + \epsilon\text{ .}
\end{align*} 
We apply the expectation value on both sides, and by the convexity of the exponential function we have
\begin{align*}
\mathbb{E}_S 
D(V'\circ f_s^{-1} (m) \parallel V'(\mathcal{X}))
&\leq\mathbb{E}_S D_2\qty(V'\circ f_S^{-1}(m)\middle\| V' (\mathcal{X}))  +\epsilon\\
&\leq\log \mathbb{E}_S\exp D_2\qty(V'\circ f_S^{-1}(m)\parallel V' (\mathcal{X}) ) +\epsilon\text{ .}
\end{align*} 
This concludes the proof.
%
%
\end{proof}

\begin{lemma}
\label{lvxshbcqc}
Let $V': \mathcal{X}\to\mathcal{L}({\cal H})$
be a subnormalized classical-quantum channel.
For every $m\in M$ we have
\begin{align*}
&\mathbb{E}_S\exp D_2\qty(V'\circ f_S^{-1} (m) \parallel V' (\mathcal{X})  )
\\&
\leq 
\lambda_2(f,m)  
 \rank[V'(\mathcal{X})] \max_{x\in \mathcal{X}} \norm{V'(x)}_\infty +1 
 ,
\yestag
\end{align*}
where $\lambda_2(f,m)$ is the second largest singular value 
of the $P_{f,m}$ defined in \Cref{bri:stochastic-matrix},
and the expectation value is over uniformly random $s\in\mathcal{S}$.
\end{lemma}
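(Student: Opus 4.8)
The plan is to evaluate the left-hand side in closed form, recognize it as a quadratic form driven by the doubly stochastic matrix $P_{f,m}$, and then split off the uniform direction. Abbreviate $\sigma\coloneqq V'(\mathcal{X})$, write $\sigma^{-1}$ for the generalized inverse on $\operatorname{supp}\sigma$, and set $W_x\coloneqq V'(x)$. Since $V'(x)\geq0$ we have the operator inequality $W_x\leq\lvert\mathcal{X}\rvert\,\sigma$, hence $\operatorname{supp}W_x\subseteq\operatorname{supp}\sigma$ for every $x$, and in particular $\operatorname{supp}\bigl(V'\circ f_s^{-1}(m)\bigr)\subseteq\operatorname{supp}\sigma$, so all the $D_2$ terms below are finite and $\exp D_2(\rho\parallel\sigma)=\tr[\rho^2\sigma^{-1}]$. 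Expanding $\bigl(V'\circ f_s^{-1}(m)\bigr)^2=\tfrac1{d_{\mathcal{S}}^2}\sum_{x,x'\in f_s^{-1}(m)}W_xW_{x'}$, averaging over the uniform seed $S$, and using $\lvert\{s:f_s(x)=f_s(x')=m\}\rvert=d_{\mathcal{S}}d_{\mathcal{X}}\,P_{f,m}(x,x')$ together with $d_{\mathcal{X}}\lvert\mathcal{X}\rvert=d_{\mathcal{S}}\lvert\mathcal{S}\rvert$ from \Cref{bri:sizes}, I obtain
\[
\mathbb{E}_S\exp D_2\qty(V'\circ f_S^{-1}(m)\parallel\sigma)=\frac1{\lvert\mathcal{X}\rvert}\sum_{x,x'}P_{f,m}(x,x')\,\tr[W_xW_{x'}\sigma^{-1}].
\]

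Next I recognize the right-hand side as a quadratic form. Putting $v_x\coloneqq W_x\sigma^{-1/2}$ and viewing it as a vector in the Hilbert--Schmidt space, one has $\langle v_x,v_{x'}\rangle_{\mathrm{HS}}=\tr[W_xW_{x'}\sigma^{-1}]$, so with $\mathbf{v}\coloneqq\sum_x\ket{x}\otimes v_x$ the last display equals $\tfrac1{\lvert\mathcal{X}\rvert}\langle\mathbf{v},(P_{f,m}\otimes\mathrm{id})\mathbf{v}\rangle$. Because $P_{f,m}$ is symmetric and doubly stochastic it fixes the all-ones vector $\mathbf{1}$ with eigenvalue $1$, and I split $P_{f,m}=\tfrac1{\lvert\mathcal{X}\rvert}J+R$, where $R$ is the restriction of $P_{f,m}$ to $\mathbf{1}^{\perp}$; then $\lambda_{\max}(R)\leq\lambda_2(f,m)$, since removing the $\mathbf{1}$-direction removes a copy of the largest singular value of $P_{f,m}$, leaving its norm bounded by the second largest, $\lambda_2(f,m)$. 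The $\tfrac1{\lvert\mathcal{X}\rvert}J$-part contributes $\tfrac1{\lvert\mathcal{X}\rvert^2}\lVert\sum_x v_x\rVert_{\mathrm{HS}}^2$, which by $\sum_x W_x=\lvert\mathcal{X}\rvert\,\sigma$ equals $\tr\sigma\leq1$ and supplies the ``$+1$''. The $R$-part is at most $\tfrac1{\lvert\mathcal{X}\rvert}\lambda_2(f,m)\lVert\mathbf{v}\rVert_{\mathrm{HS}}^2=\tfrac1{\lvert\mathcal{X}\rvert}\lambda_2(f,m)\sum_x\tr[W_x^2\sigma^{-1}]$, and here I use $W_x^2\leq\norm{W_x}_\infty W_x$ together with $\sum_x\tr[W_x\sigma^{-1}]=\tr[\lvert\mathcal{X}\rvert\,\sigma\sigma^{-1}]=\lvert\mathcal{X}\rvert\rank{\sigma}$ to bound it by $\lambda_2(f,m)\rank{\sigma}\max_x\norm{W_x}_\infty$. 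Adding the two parts gives precisely $\tr V'(\mathcal{X})+\lambda_2(f,m)\rank[V'(\mathcal{X})]\max_x\norm{V'(x)}_\infty$, which is the claim.

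The combinatorial rewriting of the seed average, the Hilbert--Schmidt identities, and the operator facts ($W_x^2\leq\norm{W_x}_\infty W_x$ and monotonicity of $A\mapsto\tr[A\sigma^{-1}]$) are all routine. The step needing genuine care is the spectral one: I must check that, after projecting out the uniform vector, the leftover block $R$ satisfies $\lambda_{\max}(R)\leq\lambda_2(f,m)$ — this uses that $P_{f,m}$ is symmetric, so its singular values are the absolute values of its eigenvalues, and Perron--Frobenius, so the spectral radius $1$ is the eigenvalue belonging to $\mathbf{1}$. Everything else, in particular the legitimacy of working with the generalized inverse $\sigma^{-1}$ and the finiteness of all the relative entropies, follows from the single operator inequality $V'(x)\leq\lvert\mathcal{X}\rvert\,V'(\mathcal{X})$.
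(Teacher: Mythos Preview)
Your proof is correct and follows essentially the same route as the paper: rewrite $\mathbb{E}_S\exp D_2$ as the quadratic form $\tfrac1{|\mathcal{X}|}\sum_{x,x'}P_{f,m}(x,x')\tr[V'(x)V'(x')\sigma^{-1}]$, split off the uniform direction of $P_{f,m}$ (which yields $\tr\sigma\leq1$), and bound the orthogonal part by $\lambda_2(f,m)\sum_x\tr[V'(x)^2\sigma^{-1}]/|\mathcal{X}|\leq\lambda_2(f,m)\rank\sigma\max_x\|V'(x)\|_\infty$. The only cosmetic difference is that the paper fixes an eigenbasis of $\sigma$ and applies the spectral estimate componentwise to the vectors $(V'_{ij}(x))_x\in\mathbb{C}^{\mathcal{X}}$, whereas you package the same computation basis-free via the Hilbert--Schmidt vectors $v_x=V'(x)\sigma^{-1/2}$ and the operator $P_{f,m}\otimes\mathrm{id}$.
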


\Cref{lvxshbcqc} is a form of leftover hash lemma for classical-quantum channels.

\begin{proof}
Recall that if $supp(\rho)\subset supp(\sigma)$ then
$\exp D_2\qty(\rho\parallel \sigma ) = \tr\qty(\rho^2\sigma^{-1})$,
where $\sigma^{-1}$ is the pseudo-inverse.
By linearity, the definition of the BRI function, 
and by expanding the mixtures, 
for every $m$ we obtain
\begin{align*}
&   \mathbb{E}_S\exp D_2\qty(V'\circ f_S^{-1} (m) \parallel V' (\mathcal{X}) )\\
&=  \mathbb{E}_S \tr\qty[(V'\circ f_S^{-1} (m))^2 V' (\mathcal{X})^{-1} ] \\
&=  \frac{1}{|\mathcal{S}|}\sum_{s\in\mathcal{S}} \tr\qty[
    \frac{1}{d_{\mathcal{S}}^2} \sum_{x,x' \in \mathcal{X}}
    V'(x)\delta_{m,f_s(x)} \delta_{m,f_s(x')} V'(x') 
    (V' (\mathcal{X}))^{-1} ] \\
&=  \frac{d_{\mathcal{X}}}{|\mathcal{S}|d_{\mathcal{S}}} \sum_{x,x' \in \mathcal{X}}
    \tr\qty[V'(x)
    \frac{1}{d_{\mathcal{S}}d_{\mathcal{X}}} 
    \sum_{s\in\mathcal{S}} 
    \delta_{f_s(x),m} \delta_{f_s(x'),m} V'(x') 
    (V' (\mathcal{X}))^{-1} ] \\
&=  \frac{1}{|\mathcal{X}|} \sum_{x,x' \in \mathcal{X}}
    \tr\qty[V'(x) P_{f,m}(x,x') V'(x') 
    (V' (\mathcal{X}))^{-1} ] \\
\end{align*} 
{where we applied the expression of $P_{f,m}$ from \Cref{afsxnicabri},
and then \Cref{bri:sizes}.

Let now $\rho$ and $\sigma$ be two states with $supp(\rho)\subset supp(\sigma)$.
Let $\{\ket{v_i}\}$ be an orthonormal basis of eigenvectors to the non-zero
eigenvalues of $\sigma$ and
let $\rho_{ij} = \bra{v_i}\rho\ket{v_j}$. Then
\begin{align*}
\tr\qty(\rho^2\sigma^{-1})
= \sum_{ij} \rho_{ij} \rho_{ji} \sigma_{ii}^{-1}.
\end{align*}
Notice that $supp(V(x))\subset supp(V(\mathcal{X}))$ for all $x$, so we can apply this to the above.
Let  $\{\lambda_i:i\}$ be the non-zero
eigenvalues of  $V(\mathcal{X})$ with
a set of  orthonormal eigenvectors
$\{|v_i\rangle:i\}$.
We now use the notation ${V'}_{ij}$ for the functions
${V'}_{ij}(x) =  \bra{v_i}V'(x)\ket{v_j}$.
Because $V'(x)$ are Hermitian, we have ${V'}_{ij}={V'}^*_{ji}$, and thus we can write}
\begin{align*}
&\mathbb{E}_S\exp D_2\qty(V'\circ f_S^{-1} (m) \parallel V' (\mathcal{X}) )\\
&=  \frac{1}{|\mathcal{X}|} \sum_{x,x' \in \mathcal{X}} \sum_{ij}
    V'_{ij}(x) P_{f,m}(x,x') V'_{ji}(x') 
    \frac{1}{{V'}_{ii} (\mathcal{X})} \\
&=  \frac{1}{|\mathcal{X}|} \sum_{ij} \frac{1}{{V'}_{ii} (\mathcal{X})} 
    \bra{{V'}_{ij}} P_{f,m} \ket{V'_{ij}},
\yestag\label{eed2vfmp1}
\end{align*}
where $\ket{{V'}_{ij}}$ are complex vectors in $\mathbb{C}^\mathcal{X}$.

\newcommand\one{{\mathbf{1}}}
We now use the following well-known result (see, e.g.,~\cite{Bre}) 
(orignally stated for real vectors -- see \Cref{omPom} 
for the generalization of the proof to complex vectors). 
$P_{f,m}$ is a symmetric stochastic matrix in an $|\mathcal{X}|$ dimensional real space
with $\lambda_2(f,m)<1$ denoting the second-largest eigenvalue.
By construction and assumption, $1$ is the largest eigenvalue and it is simple (non-degenerate).
Then, for any two vectors in $\omega$ and $\omega'$ in this space, it holds that
\begin{equation}
\label{eq:second-eigenvalue}
\bra\omega P_{f,m} \ket\omega \leq \lambda_2(f,m) \bra\omega\ket{\omega}
+\bra\omega\ket{\one}\!\!\bra\one\ket{\omega},
\end{equation}
where $\ket\one$ is the the normalized all-one vector,
namely $\bra\one = \frac{1}{\sqrt{|\mathcal{X}|}} (1,\dots,1)$.

We thus have
\begin{align*}
&\mathbb{E}_S\exp D_2\qty(V'\circ f_S^{-1} (m) \parallel V' (\mathcal{X}) )\\
&\leq\sum_{ij} \frac
{|\mathcal{X}|^{-1}}
{{V'}_{ii}(\mathcal{X})} 
[\lambda_2(f,m) \bra{{V'}_{ij}}\ket{{V'}_{ij}}
+ \bra{{V'}_{ij}}\ket{\one}\bra{\one}\ket{{V'}_{ij}}]
\end{align*} 
However, notice that by contruction,
\begin{align*}
\bra{V'}_{ij}\ket{\one}\bra{\one}\ket{{V'}_{ij}}
= |\mathcal{X}| 
  \cdot {V'}_{ij}^*(\mathcal{X}) 
  \cdot {V'}_{ij}  (\mathcal{X})
= |\mathcal{X}| \delta_{ij} ({V'}_{ii}(\mathcal{X}))^2,
\end{align*}
because the choice of basis is an eigenbasis of $V'(\mathcal{X})$.
We can thus simplify the expression to
\begin{align*}
&\mathbb{E}_S\exp D_2\qty(V'\circ f_S^{-1} (m) \parallel V' (\mathcal{X}) )\\
&\leq
\frac{1}{|\mathcal{X}|}\lambda_2(f,m)
\sum_{ij} \frac
{\bra{{V'}_{ij}}\ket{{V'}_{ij}}}
{{V'}_{ii} (\mathcal{X})} 
+ \sum_i {V'}_{ii}(\mathcal{X})\\
&=
\frac{1}{|\mathcal{X}|}\lambda_2(f,m)
\sum_{ij} \frac
{\sum_{x} {V'}_{ij}(x) {V'}_{ji}(x)}
{{V'}_{ii} (\mathcal{X})} 
+ \tr [V'(\mathcal{X})]
\\
&=
\frac{1}{|\mathcal{X}|}\lambda_2(f,m)
\sum_{x} \tr[ {V'}^2(x) (V' (\mathcal{X}))^{-1} ]
+ 1.
\end{align*} 
Now we focus on $\sum_{x}\tr[ {V'}^2(x) (V' (\mathcal{X}))^{-1} ] $.
We repeatedly use the cyclic property of the trace and 
$\tr AB \leq \norm{A}_\infty \tr|B|$,
together with the positivity of the states, to obtain the following
{\allowdisplaybreaks
\begin{align*}
&\sum_{x}\tr[ {V'}^2(x) (V' (\mathcal{X}))^{-1} ] \\
&\leq \sum_{x} \norm{ V'(x)}_\infty 
\tr \qty(\sqrt{V'(x)}(V' (\mathcal{X}))^{-1} \sqrt{V(x)})\\
&\leq\max_{x\in \mathcal{X}} \norm{V'(x)}_\infty \sum_{x}  
\tr[ V'(x)(V' (\mathcal{X}))^{-1} ]\\
&=\max_{x\in \mathcal{X}} \norm{V'(x)}_\infty \cdot |\mathcal{X}| \cdot
\tr[ V' (\mathcal{X}) (V' (\mathcal{X}))^{-1} ]\\
&=\max_{x\in \mathcal{X}} \norm{V'(x)}_\infty \cdot |\mathcal{X}| \cdot
\rank (V' (\mathcal{X})).
\yestag
\label{eed2vfmp4}
\end{align*}}

Now we join everything together and obtain
\begin{align*}
&\mathbb{E}_S\exp D_2\bigl((V'\circ f_S^{-1}) (m) \parallel V'(\mathcal{X}) \bigr)
\\&
\leq \lambda_2(f,m)  
\rank[V' (\mathcal{X})] \max_{x\in \mathcal{X}} \norm{V'(x)}_\infty +1,
\yestag\label{eed2vfmp3}
\end{align*} 
as claimed.
\end{proof}

Note that the bound in \Cref{lvxshbcqc} is technically different from
the classical version in \cite[Lemma~26]{Wi/Bo}, which bounds the leakage in terms of a max mutual information. 
As for now, we are only able to prove the lemma for finite dimensional quantum systems, while the classical version is valid also for infinite classical systems.


\begin{remark}
Using different type of functions instead of BRI functions, Hayashi and Matsumoto show~\cite[Theorem~17 and Lemma~21]{Hay/Mat} which is a result similar to \Cref{lvxshbcqc}
in the case of a single message (i.e., for resolvability~\cite[Theorem~17]{Hay/Mat}) and for ordinary classical channels. 
It is straightforward to extend this to the case of several messages and subnormalized channels. The function class of Hayashi and Matsumoto is defined via the function inverses in terms of group homomorphisms. 
The example given in~\cite{Va/Ha} uses a short seed for strong security.
It is still open whether the seed required in \cite{Hay/Mat} for semantic secrecy  can be as short as for the BRI security functions in this work.  
The size of the seed is a part of complexity of the code once it is derandomized, it may partially influence the efficiency of the code and the finite rates achieved for a finite number of channel uses.
During the completion of this work, new efficient functions with an efficient randomness size were proven to achieve semantic security in the classical setting~\cite{WB21}.
We expect these functions to provide semantic security also for quantum channels.
\end{remark}

For completeness, the proof of \Cref{eq:second-eigenvalue} follows here.
Afterwards, we will continue the chain of inequalities and bound 
the $V$ dependent term of \Cref{lvxshbcqc}.
\begin{lemma}
\label{omPom}
\newcommand\one{{\mathbf{1}}}

Let $P$ be a symmetric stochastic real matrix
in a $|\mathcal{X}|$ dimensional complex space
and let the eigenvalue $1$ be simple. 
Denote the second-largest eigenvalue modulus $P$ by $\lambda_2$.
For every vector $\ket\omega$  in this space, it holds
\begin{equation*}
\bra\omega P \ket\omega \leq \lambda_2 \bra\omega\ket{\omega}
+\bra\omega\ket{\one}\!\!\bra\one\ket{\omega},
\end{equation*}
where $\ket\one$ is the the normalized all-one vector,
namely $\bra\one = \frac{1}{\sqrt{|\mathcal{X}|}} (1,\dots,1)$.
\end{lemma}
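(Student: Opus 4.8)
The plan is to reduce the inequality to the spectral theorem for $P$, splitting $\ket\omega$ into its component along the Perron eigenvector and the part orthogonal to it.

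First I would collect the structural facts. A real symmetric stochastic matrix is doubly stochastic, so the normalized all-ones vector $\ket{\mathbf 1}$ is an eigenvector of $P$ with eigenvalue $1$; for a stochastic matrix every eigenvalue has modulus at most $1$, and by hypothesis this particular eigenvalue $1$ is simple. Since $P$ is real symmetric it is Hermitian on $\mathbb{C}^{\mathcal X}$, hence admits an orthonormal basis of eigenvectors with real eigenvalues, and every eigenvalue other than the Perron one has modulus at most $\lambda_2$ (recall that $\lambda_2$ is the second-largest eigenvalue \emph{modulus}). In particular $\ket{\mathbf 1}^{\perp}$ is $P$-invariant, and for any $\ket{\omega_\perp}\in\ket{\mathbf 1}^\perp$, expanding in an eigenbasis of $P$ restricted to that subspace and using $\sum_i \mu_i \abs{c_i}^2 \le \sum_i \abs{\mu_i}\abs{c_i}^2 \le \lambda_2 \sum_i \abs{c_i}^2$ gives $\bra{\omega_\perp} P \ket{\omega_\perp} \le \lambda_2 \braket{\omega_\perp}{\omega_\perp}$.

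Next I would write $\ket\omega = \braket{\mathbf 1}{\omega}\,\ket{\mathbf 1} + \ket{\omega_\perp}$ with $\ket{\omega_\perp}\perp\ket{\mathbf 1}$, so that $P\ket\omega = \braket{\mathbf 1}{\omega}\,\ket{\mathbf 1} + P\ket{\omega_\perp}$. Expanding $\bra\omega P\ket\omega$, the two cross terms vanish: $\bra{\mathbf 1} P \ket{\omega_\perp} = \braket{P\mathbf 1}{\omega_\perp} = \braket{\mathbf 1}{\omega_\perp} = 0$, using that $P$ is Hermitian (this is exactly where the argument carries over verbatim from real to complex vectors), and likewise $\braket{\omega_\perp}{\mathbf 1}=0$. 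What remains, using $\braket{\mathbf 1}{\mathbf 1}=1$, is $\bra\omega P\ket\omega = \abs{\braket{\mathbf 1}{\omega}}^2 + \bra{\omega_\perp}P\ket{\omega_\perp}$. Bounding the second term by $\lambda_2\braket{\omega_\perp}{\omega_\perp}\le \lambda_2\braket{\omega}{\omega}$ and recognizing $\abs{\braket{\mathbf 1}{\omega}}^2$ as the term $\bra\omega\ket{\mathbf 1}\bra{\mathbf 1}\ket\omega$ appearing on the right-hand side of the statement yields the claim.

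There is no real obstacle here; the only two points that need a word of care are that $\lambda_2$ is a modulus, so $P$ may have negative eigenvalues and one must pass through $\abs{\mu_i}$ rather than $\mu_i$ in the spectral sum, and that the generalization from real to complex $\ket\omega$ is automatic precisely because a real symmetric $P$ is Hermitian, so both the spectral decomposition and the vanishing of the cross terms survive unchanged.
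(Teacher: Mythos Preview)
Your proof is correct and follows essentially the same approach as the paper: both split $\ket\omega$ into its $\ket{\mathbf 1}$-component and the orthogonal remainder, use $P\ket{\mathbf 1}=\ket{\mathbf 1}$ to kill the cross terms, and then bound $P$ on $\ket{\mathbf 1}^\perp$ by $\lambda_2$. Your handling of the case of negative eigenvalues via $\mu_i\le\abs{\mu_i}\le\lambda_2$ is in fact a bit more explicit than the paper's, which simply cites the literature for $\lambda_2\ge 0$ and invokes ``largest eigenvalue on the remaining subspace''.
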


\begin{proof}
\allowdisplaybreaks
\newcommand\one{{\mathbf{1}}}
Notice that being doubly stochastic implies
satisfying $P\ket\one = \ket\one$ and $\bra\one P = \bra\one$.
We first add and remove the $\ket\one$ component from $\ket\omega$:
\begin{align*}
&\bra\omega P \ket\omega\\
&=(\bra\omega-\bra\omega\ket{\one}\!\!\bra\one)P(\ket\omega-\ket\one\!\!\bra\one\ket{\omega})
\\&\quad+\bra\omega\ket{\one}\!\!\bra\one P \ket\omega
+\bra\omega P \ket\one\!\!\bra\one\ket{\omega}
-\bra\omega\ket{\one}\!\!\bra{\one}P\ket{\one}\!\!\bra\one\ket{\omega}\\
&=(\bra\omega-\bra\omega\ket{\one}\!\!\bra\one)P(\ket\omega-\ket\one\!\!\bra\one\ket{\omega})
+\bra\omega\ket{\one}\!\!\bra\one\ket{\omega}\\
\intertext
{because $1$ is a simple eigenvalue,
in the remaining subspace the largest eigenvalue is $\lambda_2$.
Since $\lambda_2$ is positive for such a matrix~\cite{Bre}, we have }
&\bra\omega P \ket\omega\\
&\leq(\bra\omega-\bra\omega\ket{\one}\!\!\bra\one)\lambda_2
  (\ket\omega-\ket\one\!\!\bra\one\ket{\omega})
+\bra\omega\ket{\one}\!\!\bra\one\ket{\omega}\\
&=\lambda_2 (\bra\omega\ket{\omega}
-\bra\omega\ket{\one}\!\!\bra\one\ket{\omega})
+\bra\omega\ket{\one}\!\!\bra\one\ket{\omega}\\
&\leq \lambda_2 \bra\omega\ket{\omega}
+\bra\omega\ket{\one}\!\!\bra\one\ket{\omega}.
\end{align*}
\end{proof}

Putting all the results above together 
we obtain the following single statement for BRI functions.

\begin{corollary}
\label{thm:bri-function}
Let $\epsilon>0$ and let $V': \mathcal{X}\to\mathcal{L}({\cal H})$ 
be an $\epsilon$-subnormalized classical-quantum channel such that $V'\leq V$.
For any random variable $M$ over $\mathcal{M}$ 
independent of the uniform seed $S$, 
it holds
\begin{align*}
\chi(M;S,V\circ f_S^{-1}) 
&\leq
\frac{1}{\ln 2} \max_{m\in\mathcal{M}} \lambda_2(f,m)  
\rank[V'(\mathcal{X})] \max_{x\in \mathcal{X}} \norm{V'(x)}_\infty +
\\&\quad+\epsilon+\epsilon\log\frac{|\mathcal{X}|}{d_{\mathcal{S}}}.
\end{align*}
\end{corollary}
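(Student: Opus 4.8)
The plan is to assemble the chain of inequalities \Cref{lem:divasmi,lem:pass_to_subnormalization,d2vfmvux,lvxshbcqc}, in exactly the order in which they were proved, and then estimate the single remaining logarithm by an elementary bound. First I would apply \Cref{lem:divasmi} to replace the leakage $\chi(M;S,V\circ f_S^{-1})$ by $\max_{m\in\mathcal{M}}\mathbb{E}_S D(V\circ f_S^{-1}(m)\parallel V(\mathcal{X}))$, which already discards the dependence on the message distribution $M$. Fixing a message $m$ attaining this maximum, I would then invoke \Cref{lem:pass_to_subnormalization} --- this is the step that uses the hypotheses $V'\le V$ and $\epsilon$-subnormalization of $V'$ --- to pass to the subnormalized channel $V'$ at the cost of the additive term $\epsilon\log(|\mathcal{X}|/d_{\mathcal{S}})$, so that it remains to bound $\mathbb{E}_S D(V'\circ f_S^{-1}(m)\parallel V'(\mathcal{X}))$.

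Next I would apply \Cref{d2vfmvux} to dominate this last quantity by $\log\mathbb{E}_S\exp D_2(V'\circ f_S^{-1}(m)\parallel V'(\mathcal{X}))+\epsilon$, and then plug in \Cref{lvxshbcqc}, which bounds $\mathbb{E}_S\exp D_2(V'\circ f_S^{-1}(m)\parallel V'(\mathcal{X}))$ by $\lambda_2(f,m)\,\rank[V'(\mathcal{X})]\,\max_{x\in\mathcal{X}}\norm{V'(x)}_\infty+1$. Composing these, the leakage is bounded by $\log\bigl(\lambda_2(f,m)\,\rank[V'(\mathcal{X})]\,\max_{x}\norm{V'(x)}_\infty+1\bigr)+\epsilon+\epsilon\log(|\mathcal{X}|/d_{\mathcal{S}})$. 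Since the rank and operator-norm factors do not depend on $m$, the maximum over messages coming from \Cref{lem:divasmi} can be pulled inside the monotone logarithm, turning $\lambda_2(f,m)$ into $\max_{m}\lambda_2(f,m)$. Finally, using $\log(1+t)=\ln(1+t)/\ln 2\le t/\ln 2$ for all $t\ge 0$, applied to $t=\max_{m}\lambda_2(f,m)\,\rank[V'(\mathcal{X})]\,\max_{x}\norm{V'(x)}_\infty$, the logarithmic term becomes the claimed $\frac{1}{\ln 2}\max_{m}\lambda_2(f,m)\,\rank[V'(\mathcal{X})]\,\max_{x}\norm{V'(x)}_\infty$, and collecting the two error terms $\epsilon$ and $\epsilon\log(|\mathcal{X}|/d_{\mathcal{S}})$ yields the statement.

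As for the main obstacle: there is essentially none, since all the analytic content has been isolated into the four preceding lemmas, and this corollary is purely a matter of bookkeeping. The only points requiring a moment's care are the commutation of the message-maximum with the logarithm (immediate, since the rank and norm factors are message-independent) and the use of the estimate $\log(1+t)\le t/\ln 2$ rather than a coarser one, which is precisely what produces the clean constant $1/\ln 2$ in the first summand.
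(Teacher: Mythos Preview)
Your proposal is correct and follows exactly the same route as the paper: the paper's proof simply says to join \Cref{lem:divasmi,lem:pass_to_subnormalization,lem:pass-to-renyi,lvxshbcqc} to obtain the bound with $\log(1+\cdot)$, and then applies $\log(1+x)\le x/\ln 2$. Your write-up is in fact slightly more explicit than the paper's (you spell out how the maximum over $m$ commutes with the logarithm), but the argument is identical.
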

\begin{proof}
Joining \Cref{lem:divasmi,lem:pass_to_subnormalization,lem:pass-to-renyi,lvxshbcqc},
we obtain directly,
\begin{align*}
\chi(M;S,V\circ f_S^{-1}) 
&\leq
\log (1 + \max_{m\in\mathcal{M}} \lambda_2(f,m)  
\rank[V'(\mathcal{X})] \max_{x\in \mathcal{X}} \norm{V'(x)}_\infty )+
\\&\quad+\epsilon+\epsilon\log\frac{|\mathcal{X}|}{d_{\mathcal{S}}}.
\end{align*}
The result follows simply from $\log(1+x)\leq x/\ln 2$.
\end{proof}

In the next section we will finally define 
what a code using BRI functions looks like.
The chain of lemmas above will allow us to prove 
that we can achieve capacity with such codes.
There, the classical-quantum channels of the lemmas above
will be the classical-quantum channel generated by 
a transmission code around the actual wiretap channel.
So the $V$ above should not be confused with the actual wiretap,
but instead it will be the composition of the wiretap $V^{\otimes n}$ and the encoder.
This is why all the lemmas above are single letter.

\subsection{BRI modular codes}

We can now prove the final statements.
As will be noticed in the next proof, 
BRI functions are not used to upgrade the strong secrecy achieved 
by, e.g., a hash function or any strong-secrecy capacity-achieving code.
Instead, the BRI functions replace hash functions and 
directly produce a semantic-secrecy capacity-achieving code
out of a capacity-achieving error-correction code.

Let us now fix an actual wiretap channel.
We fix a finite space $\mathcal{X}$,
two finite quantum systems $H$ and $H'$,
and a classical-quantum wiretap channel $(W,V)$ 
defined as the classical-quantum channels $W:\mathcal{X}\to \mathcal{S}(H)$ and $V:\mathcal{X}\to \mathcal{S}(H')$.
For reference, recall that an $(n, |\mathcal{S}|, |\mathcal{M}|)$ common-randomness code is a finite subset 
$\qty{C^{s}=\{(E^{s}_m,D^{s}_m):m\in\mathcal{M}\}:s\in\mathcal{S}}$
of the set of $(n, |\mathcal{M}|)$ codes, labeled by a finite set $\mathcal{S}$, the common randomness.
We then define BRI modular codes as follows.

\begin{definition}
\renewcommand{\emph}{\textbf}
Let $\mathcal{S}$, and $\mathcal{M}$ be the finite sets 
for the space of the seeds, the messages and the encodings.
Let $\qty{x^n_c,D_c}_{c\in \mathcal{C}}$ 
be an $(n,|\mathcal{C}|)$ code for $W$,
and let $f:\mathcal{S}\times\mathcal{C}\to \mathcal{M}$ be a BRI function.
We define their \emph{BRI modular code} to be the common-randomness code such that
for every seed $s\in\mathcal{S}$ and message $m\in\mathcal{M}$
\begin{enumerate}
    \item the encoder $E^s_m$ is the uniform distribution over $\qty{x^n_c: c \in f_s^{-1}(m)}$,
    \item the decoder is $D^s_m = \sum_{c\in f_s^{-1}(m)} D_c$.
\end{enumerate}
\end{definition}
Notice that, in practice, for the decoder it will be more straightforward 
to simply decode $c$ and then compute directly $f_s(c)$,
instead of implementing the coarse grained decoding operators.

\begin{theorem}
\label{thm:bri-single}
\label{pdpaoimtcr} 
For any probability distribution $P$ over $\mathcal{X}$:
\begin{enumerate}
    \item there exist BRI modular codes achieving the semantic secrecy rate $\chi(P;W)-\chi(P;V)$ using codes achieving the transmission rate $\chi(P;W)$;
 \item the same rate is achievable with their derandomized codes.
\end{enumerate}
\end{theorem}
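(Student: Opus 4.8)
The plan is to prove both statements by composing a capacity-achieving transmission code for $W$ with the explicit BRI function of~\cite{Wi/Bo}, analysing the resulting BRI modular code via the chain \Cref{lem:divasmi}--\Cref{lvxshbcqc} (equivalently \Cref{thm:bri-function}), and then removing the common randomness with \Cref{thm:Nderandom}. Fix $P$ on $\mathcal X$ and $\delta>0$. By the standard coding theorem for classical-quantum channels there is, for all large $n$, an $(n,|\mathcal C_n|)$ transmission code $\{x^n_c,D_c\}_{c\in\mathcal C_n}$ for $W$ with $\tfrac1n\log|\mathcal C_n|\ge\chi(P;W)-\delta$, exponentially small error, and all codewords $P$-typical; after a further negligible expurgation we may also assume $\tr[\Pi_nV^{\otimes n}(x^n_c)]\ge1-2^{-n\gamma}$ for every $c$, where $\Pi_n$ is the typical projector of $(PV)^{\otimes n}$, so that $\tr\Pi_n\le2^{n(S(PV)+\delta)}$. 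Set $d_n:=\lceil 2^{n(\chi(P;V)+3\delta)}\rceil$ and, using~\cite{Wi/Bo}, pick a BRI function $f\colon\mathcal C_n\times\mathcal C_n\to\mathcal M_n$ with $d_{\mathcal S}=d_n$ and $\lambda_2(f,m)\le 4/d_n$; matching the sizes costs only $o(n)$ in the rate and gives $\tfrac1n\log|\mathcal M_n|=\tfrac1n(\log|\mathcal C_n|-\log d_n)\ge\chi(P;W)-\chi(P;V)-4\delta$. Let $\mathcal C_n^{\mathrm{BRI}}$ be the associated BRI modular code; since its decoder first decodes $c$ and then outputs $f_s(c)$, its error is bounded by the transmission error and hence is exponentially small.

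The substantive part is the leakage, which by construction equals $\chi(M;S,\tilde V\circ f_S^{-1})$ with $\tilde V(c):=V^{\otimes n}(x^n_c)$, for every message distribution $M$. One cannot feed $\tilde V$ directly into the chain \Cref{lem:divasmi}--\Cref{lvxshbcqc}: the bound of \Cref{lvxshbcqc} is only useful when the rank of the average output is as small as $2^{nS(PV)}$, which forces a projection onto a low-dimensional subspace, whereas the hypotheses of \Cref{thm:bri-function} require an operator inequality $V'\le\tilde V$ — and, as emphasised in the introduction, the conditionally typical subspaces of the states $\tilde V(c)$ live in \emph{different} eigenbases, so these two demands are incompatible for $\tilde V$ itself. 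The way around this is to trim first. Letting $Q_c$ be the conditionally typical projector of $\tilde V(c)$ (so $Q_c\tilde V(c)\le 2^{-n(S(V|P)-\delta)}Q_c$ and $\tr[Q_c\tilde V(c)]\ge 1-2^{-n\gamma}$), set $V'(c):=\Pi_nQ_c\,\tilde V(c)\,Q_c\Pi_n$ and let $\hat V(c)$ be its normalization. Then $V'$ is a $2^{-n\gamma'}$-subnormalized channel with $V'\le\hat V$, and two applications of the gentle-measurement lemma give $\norm{\hat V(c)-\tilde V(c)}_1\le 2^{-n\gamma'}$ for all $c$; hence $\norm{\hat V\circ f_s^{-1}(m)-\tilde V\circ f_s^{-1}(m)}_1\le 2^{-n\gamma'}$, and by continuity of the conditional Holevo information — the same estimate used in the proof of \Cref{thm:capacity} — the actual leakage differs from $\chi(M;S,\hat V\circ f_S^{-1})$ by at most $2^{-n\gamma'}\,n\log\dim H'+h(2^{-n\gamma'})$, which tends to $0$ uniformly in $M$.

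It remains to bound $\chi(M;S,\hat V\circ f_S^{-1})$. By construction the support of each $V'(c)$ is contained in that of $\Pi_n$, so the average $V'(\mathcal C_n)$ has $\rank\le\tr\Pi_n\le 2^{n(S(PV)+\delta)}$, while $\norm{V'(c)}_\infty\le\norm{Q_c\tilde V(c)}_\infty\le 2^{-n(S(V|P)-\delta)}$. Applying \Cref{thm:bri-function} with $\mathcal X=\mathcal C_n$, channel $\hat V$, subnormalization $V'$, and $\lambda_2(f,m)\le 4/d_n$, and using $\log(1+x)\le x/\ln 2$ together with $S(PV)-S(V|P)=\chi(P;V)$ and the choice of $d_n$, one gets
\[
\chi(M;S,\hat V\circ f_S^{-1})\ \le\ \frac{1}{\ln 2}\cdot\frac{4}{d_n}\cdot 2^{n(S(PV)+\delta)}\cdot 2^{-n(S(V|P)-\delta)}\ +\ o(1)\ =\ \frac{4}{\ln 2}\,2^{-n\delta}+o(1)\ \xrightarrow[n\to\infty]{}\ 0 ,
\]
the $o(1)$ absorbing the two subnormalization terms $2^{-n\gamma'}+2^{-n\gamma'}\log(|\mathcal C_n|/d_n)$. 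Thus $\{\mathcal C_n^{\mathrm{BRI}}\}$ is a family of common-randomness semantic-secrecy codes of rate $\ge\chi(P;W)-\chi(P;V)-4\delta$ with exponentially small error and leakage; letting $\delta\downarrow0$ proves part~1. For part~2 I would invoke \Cref{thm:Nderandom}: use a transmission code for $W$ of block length $n$ and rate $\chi(P;W)-\delta$ (below the transmission capacity, which is at least $\chi(P;W)$) to produce the seed $s\in\mathcal S_n=\mathcal C_n$, reuse this seed $N$ times, and note that since the error and leakage of $\mathcal C_n^{\mathrm{BRI}}$ are exponentially small in $n$, already a fixed $N$ yields a derandomized $\bigl(n(N+1),\tfrac{N}{N+1}R,(N+1)2^{-n\beta}\bigr)$ semantic-secrecy code \emph{without} common randomness; taking $N$ large, the rate $\tfrac{N}{N+1}R$ approaches $R=\chi(P;W)-\chi(P;V)-4\delta$ while the error and leakage still vanish, which gives part~2.

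The only genuinely non-routine point is the leakage step of the second paragraph — the ``different eigenspaces'' obstruction, i.e.\ reconciling the operator-ordering hypothesis of the BRI lemmas with the need for a rank bound of order $2^{nS(PV)}$ on the average wiretap output; everything else (the coding theorem, the expurgations, gentle measurement, the Holevo-continuity estimate, and the derandomization bookkeeping) is standard.
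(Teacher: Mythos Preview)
Your proof is correct; the overall architecture (typical codewords, the explicit BRI family of~\cite{Wi/Bo}, \Cref{thm:bri-function} for the leakage, \Cref{thm:Nderandom} for derandomisation, with the paper choosing $N=n-1$ rather than a large fixed $N$) matches the paper. The one substantive difference is precisely the step you flag as non-routine. The paper does \emph{not} take your detour through $\hat V$ and continuity: it applies \Cref{thm:bri-function} directly to $U=E'V^{\otimes n}$ (your $\tilde V$) with subnormalisation $U'=E'V'$, relying on the assertion $V'\le V^{\otimes n}$ stated as ``obvious'' in \Cref{0dltsnpo}. Your concern about that operator inequality is legitimate---the inner conditional-typical projector commutes with $V^{\otimes n}(x^n)$, but the outer projector $\Pi^n_{PV,\delta}$ need not, and conjugation by a non-commuting projector does not in general decrease a positive operator in the Loewner order. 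Your workaround (normalise $V'$ to $\hat V$ so that $V'\le\hat V$ holds trivially, bound $\chi(M;S,\hat V\circ f_S^{-1})$ via \Cref{thm:bri-function}, then pass to $\tilde V$ by a Fannes/Shirokov continuity estimate) is a clean and self-contained alternative that sidesteps this issue at the cost of one extra standard step; the rank and norm inputs you feed into \Cref{thm:bri-function} are exactly those the paper obtains from \Cref{0dltsnpo}. Two cosmetic remarks: the expurgation forcing $\tr[\Pi_nV^{\otimes n}(x^n_c)]\ge1-2^{-n\gamma}$ is unnecessary, since \cref{te7} already gives this for every $x^n_c\in\mathcal T^n_{P,\delta}$; and the continuity bound you cite from the proof of \Cref{thm:capacity} carries the factor $\log|\mathcal M_n|$ rather than $n\log\dim H'$, though either is $O(n)$ and is swallowed by the exponential prefactor.
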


Notice that this theorem implies that the classical-quantum wiretap channel capacity can also be achieved with such modular codes, and thus in particular the second point of the theorem also provides an alternative proof for \Cref{thm:capacity} with BRI scheme in the case of finite dimensional channels. 
Indeed, since the theorem holds for all $P$, we can also directly achieve the supremum. 
This single letter formula then implies the multi-letter formula by standard argument.
More precisely, we can write the classical-quantum wiretap capacity as 
\begin{align}
    C_\textnormal{w} (W,V)&
    = \sup_{n\in\mathbb{N}}\frac{1}{n}\max_{ E}
    C_\textnormal{w}^1( EW^{\otimes n}, EV^{\otimes n}),
\end{align}
where 
\begin{equation}
    C_\textnormal{w}^1 (W,V) \coloneqq \max_{P}( \chi(P;W)- \chi(P;V)).
\end{equation}
Then the standard argument, which we reproduce in  \Cref{thm:bri} for completeness, shows that if $C_\textnormal{w}^1$ is achievable by a class of codes, it automatically follows that $C_\textnormal{w}$ is also achievable.

Finally, the finite blocklength results can be extracted by looking at \cref{eq:finite-blocklength:size,eq:bricode-leakage} in the proof, and they depend on the finite-blocklength parameters of the chosen transmission code.
Similar finite-blocklength results for derandomized codes are found in \cref{eq:finite-blocklength:derandomized}.

\begin{proof}
Fix the arbitrary distribution $P$, fix any $\epsilon>0$, and let $\delta$ be a positive number which will later be chosen as a function of $\epsilon$. By~\cite{Ho,Sch/Ni,Sch/Wes}, there exists a $\gamma>0$ such that for sufficiently large $n$, there exists an $(n,|\mathcal{X}'|)$ transmission code $\mathcal \{E'_{x'},D'_{x'}:{x'}\in\mathcal{X}'\}$ for $W$ whose rate is at least $\chi(P;W)-\epsilon/2$, whose maximal error probability is at most $2^{-n\gamma}$, and whose codewords  moreover are all $\delta$-typical, namely the encoders satisfy $E(\mathcal{T}_{P,\delta}^n\vert {x'})=1$ for all messages ${x'}\in\mathcal{X}'$. (For the definition of $\mathcal{T}_{P,\delta}^n$, see \Cref{WnbtVdto}. For an explicit proof that the error can be made to decrease exponentially, see, e.g.,~\cite[Lemma 4.1]{Hay}).

We have to find a suitable BRI function in order to ensure semantic security. By enlarging $n$ if necessary, we have enough flexibility to choose integers $k$ and $d$ satisfying
\begin{align*}
    n\left(\chi(P;V)+\frac{\epsilon}{4}\right)
    &\leq\log d
    \leq n\left(\chi(P;V)+\frac{\epsilon}{2}\right),\\
    n\left(\chi(P;W)-\frac{\epsilon}{2}\right)
    &\leq k+\log d
    \leq \log |\mathcal{X}'|.
\end{align*}
As previously mentioned, we can choose BRI functions $f:\mathcal S\times\mathcal S\to\mathcal M$ from~\cite{Wi/Bo}, satisfying $|\mathcal{S}|=2^kd$, $|\mathcal{M}|=2^k$ and 
\begin{equation}\label{eq:Ram_BRI_EV}
    \lambda_2(f,m) \leq \frac{4}{d}\leq 4\cdot  2^{-n\left(\chi(P;V)+\frac{\epsilon}{4}\right)} .
\end{equation}
We think of  $\mathcal S$ as a subset of $\mathcal{X}'$ and define $\qty{\mathcal{C}^s\coloneqq\qty{E^s_m,D_m^s:m\in\mathcal M}:s\in\mathcal S}$ to be the BRI modular code constructed from $\mathcal{C}' \coloneqq\qty{E'_s,D'_{s}:{s}\in\mathcal{S}}$ and $f$. Its rate clearly satisfies
\begin{equation}
    \frac{\log |\mathcal M|}{n} = \frac{k}{n} \geq\chi(P;W)-\chi(P;V)-\epsilon
    \label{eq:finite-blocklength:size}
\end{equation}
and the maximal error probability is no larger than that of the transmission code, i.e., it is upper-bounded by $2^{-n\gamma}$. In order to evaluate the security of the BRI modular code, we define the classical-quantum channel $U=E'V^{\otimes n}$ and upper-bound \[\chi(M;S,E^S V^{\otimes n}) = \chi(M;S,U\circ f_S^{-1}),\] for any random variable $M$ on $\mathcal M$ independent of the uniform seed $S$. 

We introduce a subnormalized classical-quantum channel $V':\mathcal T_{P,\delta}^n\to\mathcal L(\mathcal H^n)$ by defining
\begin{equation}
    \label{eq:bricode-tipical}
    {V'}(x^n)
    =\Pi^n_{PV, \delta}\Pi^n_{V,\delta}(x^n)\cdot V^{\otimes n}(x^n) \cdot \Pi^n_{V,\delta}(x^n) \Pi^n_{PV, \delta }.
\end{equation} 
For the definition of $\Pi^n_{PV, \delta}$ and $\Pi^n_{V,\delta}(x^n)$, see \Cref{WnbtVdto}. By \Cref{0dltsnpo}, $V'$ is a $2^{-n\eta(\delta)}$ subnormalized classical-quantum channel satisfying $V'(x^n)\leq V^{\otimes n}(x^n)$ for all  $x^n\in\mathcal T_{P,\delta}^n$. Since all codewords are contained in $\mathcal T_{P,\delta}^n$, 
\[U'\coloneqq E'V'\]
is a $2^{-n\eta(\delta)}$ subnormalized classical-quantum channel satisfying $U'\leq U$, and \Cref{thm:bri-function} and \Cref{eq:Ram_BRI_EV} imply
\begin{align}
\chi(M;S,E^S V^{\otimes n})
    &\leq\frac{4}{d\ln 2}\rank[U'(\mathcal{X}')]\max_{x'\in\mathcal{X}'}\norm{ U'(x')}_{\infty}+2^{-n\eta(\delta)}(k+1).
    \label{eq:ub_subn_div}
\end{align} 
Since the inputs are chosen from a set of  typical sequence $\mathcal{T}^n_{P,\delta}$,
\Cref{0dltsnpo} can be used to find
\begin{align*}
    \rank[U'(\mathcal{X}')]\max_{x'\in\mathcal{X}'}\lVert U'(x')\rVert_\infty
    &\leq\rank[V'(\mathcal X^n)]\max_{x^n\in T_{P,\delta}}\lVert V'(x^n)\rVert_\infty\\
    &\leq 2^{n(\chi(P,V)+\gamma''(\delta))}.
\end{align*}
Inserting this and \Cref{eq:Ram_BRI_EV} into \Cref{eq:ub_subn_div} gives
\begin{equation}
    \chi(M;S,E^S V^{\otimes n}) 
    \leq\frac{4}{\ln 2}2^{-n(\epsilon/4-\gamma''(\delta))}+(k+1)2^{-n\eta(\delta)}.
\end{equation}
Since $k$ is $n$ times the rate of our common-randomness code,
thus by \Cref{thm:capacity} it cannot grow faster than $nC_\textnormal{w}(W,V)$, and thus 
\begin{equation}
    \label{eq:bricode-leakage}
    \chi(M;S,E^S V^{\otimes n}) 
    \leq\frac{4}{\ln 2}2^{-n(\epsilon/4-\gamma''(\delta))}+(nC_\textnormal{w}(W,V)+\epsilon'+1)2^{-n\gamma}.
\end{equation}
Now choose $\delta$ small enough for $\gamma''(\delta)<\epsilon/4$ to hold. Then this upper bound tends to zero at exponential speed with $n$. Hence as the blocklength $n$ increases, our BRI modular code $\{\mathcal{C}^s:s\in\mathcal S\}$ achieves the rate $\chi(P;W)-\chi(P;V)$ with exponentially decreasing error probability and leakage.

\bigskip
As previously mentioned, the codes we constructed use common randomness.
This allows us to simply provide the seed needed by the BRI modular code and keep the proof focused on the properties of the BRI function. We now derandomize these codes. Notice, however, that this is a standard procedure, and does not really depend on the structure of the BRI modular codes, but simply in the scaling of its size and errors.

We derandomize, as per \Cref{def:derandomized}, the BRI modular code above. 
We set $n'=n$ and share the seed with the same transmission code $\mathcal{C}' $ used to construct the BRI modular code.
For the number of reuses of the seed we need to choose a sequence $(N(n))_{n\in\mathbb{N}}$ such
that $1 \ll N(n) \ll 2^{n\gamma}$,  $N(n) \ll 2^{n(\epsilon/4-\gamma''(\delta))}$, and
$N(n) \ll (nC_\textnormal{w}(W,V)+\epsilon'+1)^{-1}2^{n\gamma}$.
For simplicity it suffices to choose $N(n)=n-1$ and thus we define  $\bar{\mathcal{C}}$ as the $n-1$-derandomized code constructed from $\mathcal{C}'$ and $\qty{\mathcal{C}^s:s\in\mathcal S}$. The total number of channel uses is then $n^2$.
By \Cref{thm:Nderandom} we have 
an $(n^2, \frac{n^2-n}{n^2} R, 2^{-n\gamma}n)$ semantic-secrecy code.
Since now we have $|{\mathcal M}^{n-1}|= 2^{(n-1)k}$, similar to
 (\ref{eq:finite-blocklength:size}), we have
 \begin{equation}    \frac{\log |\mathcal M ^{n-1}|}{n^2} = \frac{n-1}{n}\frac{k}{n} \geq \qty(1-\frac{1}{n}) \qty(\chi(P;W)-\chi(P;V)-\epsilon).
\end{equation}
When we set  
 \Cref{eq:bricode-leakage} into (\ref{thm:Nderandomineq}),
we have
\begin{equation}
    \chi(M;\bar{E}V^{\otimes n^2}) 
    \leq\frac{4n}{\ln 2}2^{-n(\epsilon/4-\gamma''(\delta))}+n(nC_\textnormal{w}(W,V)+\epsilon'+1)2^{-n\gamma}.
\label{eq:finite-blocklength:derandomized}
\end{equation}
The same argument as for the BRI modular code now works. Since $\delta$ was chosen to satisfy $\gamma''(\delta)<\epsilon/4$, this upper bound still tends to zero with $n$, and our derandomized BRI modular code achieves the rate $\chi(P;W)-\chi(P;V)$.
\end{proof}

The following is the standard statement that any single letter achievable rate implies a multi-letter achievable rate. We give a proof for completeness.

\begin{lemma}
\label{thm:bri}
If $C^1 (W,V)$ is an achievable rate, then 
\begin{align}
    C^\infty (W,V) 
    \coloneqq \sup_{n\in\mathbb{N}}\frac{1}{n}\max_{ E}
    C^1( EW^{\otimes n}, EV^{\otimes n}),
\end{align}
where the maximum is over finite sets $\mathcal A$ and stochastic mappings $ E:\mathcal A\to\mathcal X^{n}$, is also an achievable rate.
\end{lemma}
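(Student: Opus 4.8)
The plan is the standard ``super-channel'' block-coding argument: reduce multi-letter achievability for $(W,V)$ to single-letter achievability applied to the auxiliary classical-quantum wiretap channel obtained by pre-composing $(W,V)$ with $E$. Fix $n\in\mathbb N$, a finite set $\mathcal A$, and a stochastic mapping $E:\mathcal A\to\mathcal P(\mathcal X^{n})$, and set $\widetilde W\coloneqq EW^{\otimes n}:\mathcal A\to\mathcal S(H^{\otimes n})$ and $\widetilde V\coloneqq EV^{\otimes n}:\mathcal A\to\mathcal S(H'^{\otimes n})$. By hypothesis the number $C^{1}(\widetilde W,\widetilde V)$ is an achievable (semantic-secrecy) rate for the classical-quantum wiretap channel $(\widetilde W,\widetilde V)$; so for every $\epsilon>0$ and every sufficiently large $\ell$ there is an $(\ell,\,C^{1}(\widetilde W,\widetilde V)-\epsilon,\,\epsilon)$ semantic-secrecy code $\widetilde{\mathcal C}_{\ell}=\{\widetilde E_{m},\widetilde D_{m}:m\in\mathcal M\}$ for $(\widetilde W,\widetilde V)$, with stochastic encoder $\widetilde E:\mathcal M\to\mathcal P(\mathcal A^{\ell})$ and decoder operators acting on $H^{\otimes n\ell}$.

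Next I would lift $\widetilde{\mathcal C}_{\ell}$ to a block-length-$n\ell$ code $\widehat{\mathcal C}$ for $(W,V)$ by concatenating its encoder with the $\ell$-fold blockwise product $E^{\otimes\ell}:\mathcal A^{\ell}\to\mathcal P(\mathcal X^{n\ell})$ of $E$: the new encoder is the classical channel $\widehat E:\mathcal M\to\mathcal P(\mathcal X^{n\ell})$ obtained by composing $\widetilde E$ with $E^{\otimes\ell}$, and the decoder operators are kept, $\widehat D_{m}\coloneqq \widetilde D_{m}$. The one identity that makes everything work is $E^{\otimes\ell}W^{\otimes n\ell}=(EW^{\otimes n})^{\otimes\ell}=\widetilde W^{\otimes\ell}$, and likewise $E^{\otimes\ell}V^{\otimes n\ell}=\widetilde V^{\otimes\ell}$, as classical-quantum channels on $\mathcal A^{\ell}$. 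By associativity of composition it follows that $\widehat E_{m}W^{\otimes n\ell}=\widetilde E_{m}\widetilde W^{\otimes\ell}$ and $\widehat E_{m}V^{\otimes n\ell}=\widetilde E_{m}\widetilde V^{\otimes\ell}$, so the maximal error of $\widehat{\mathcal C}$ equals that of $\widetilde{\mathcal C}_{\ell}$, and for \emph{every} message random variable $M$ on $\mathcal M$ the leakage satisfies $\chi(M;\widehat E V^{\otimes n\ell})=\chi(M;\widetilde E\widetilde V^{\otimes\ell})<\epsilon$. Thus $\widehat{\mathcal C}$ is an $(n\ell,R,\epsilon)$ semantic-secrecy code for $(W,V)$ with $R=\tfrac1{n\ell}\log|\mathcal M|\geq \tfrac1n\bigl(C^{1}(\widetilde W,\widetilde V)-\epsilon\bigr)$. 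Letting $\ell\to\infty$ (so that $n\ell\to\infty$) this proves that $\tfrac1n\,C^{1}(EW^{\otimes n},EV^{\otimes n})$ is an achievable semantic-secrecy rate for $(W,V)$, for every $n$ and every $E$.

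Finally I would pass from the collection of achievable rates just produced to their supremum. The set of achievable semantic-secrecy rates (\Cref{def:semsec}) is downward closed and closed: if $R'\le R$ and $R$ is achievable, then choosing $\epsilon'\le\epsilon$ shows that any $(n,R-\epsilon',\epsilon')$ code is also an $(n,R'-\epsilon',\epsilon')$ code, so $R'$ is achievable; and for any set of achievable rates with supremum $C$, picking an achievable $R>C-\epsilon/2$ and then an $(n,R-\epsilon/2,\epsilon/2)$ code shows $C$ is achievable too. Applying this to $C^{\infty}(W,V)=\sup_{n}\tfrac1n\max_{E}C^{1}(EW^{\otimes n},EV^{\otimes n})$, which by the previous paragraph is the supremum of a set of achievable rates, gives that $C^{\infty}(W,V)$ is achievable, as claimed.

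I do not expect a genuine obstacle here; the argument is bookkeeping. The only two points requiring a little care are: (i) verifying that pre-composition with $E^{\otimes\ell}$ alters neither the error nor the leakage, so that the semantic-security guarantee — which must hold against \emph{every} message distribution — transfers verbatim to the lifted code; and (ii) the elementary closedness argument that turns ``every rate below $C^{\infty}(W,V)$ is achievable'' into ``$C^{\infty}(W,V)$ is achievable''. One may additionally remark, although it is not needed for the statement, that the outer $\max_{E}$ is attained by compactness once $|\mathcal A|$ is bounded (e.g.\ $|\mathcal A|\le|\mathcal X|^{n}$ suffices) together with continuity of the Holevo quantity, and that the lifted code inherits the structural form of the inner one — in particular a BRI modular code stays a BRI modular code after pre-composing its transmission part with $E^{\otimes\ell}$ — which is what is used when the lemma is invoked for \Cref{thm:bri-single}.
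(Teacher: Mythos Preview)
Your proposal is correct and follows essentially the same super-channel/block-coding argument as the paper: fix $n$ and $E$, apply the single-letter hypothesis to $(EW^{\otimes n},EV^{\otimes n})$, lift the resulting code to block length $n\ell$ by pre-composing with $E^{\otimes\ell}$, observe that error and leakage are unchanged, and pass to the supremum. Your write-up is in fact more careful than the paper's, which compresses the closedness argument into the phrase ``taking the supremum concludes the proof'' and leaves the transfer of the semantic-security guarantee (for every message distribution) implicit.
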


\begin{proof}
In order to show that $C_{\text{sem}}(W,V)$ is also achievable given that $C^1( W, V)$ is achievable, we pick any $n$ and $ E$. We obtain a new classical-quantum wiretap channel $( EW^{\otimes n}, EV^{\otimes n})$ for which we know that the rate $C^1( EW^{\otimes n}, EV^{\otimes n})$ is achievable. Specifically, for any $\varepsilon>0$ and  sufficiently large $n'$, there exists an $(n',C^1( EW^{\otimes n}, EV^{\otimes n}) - \epsilon,\epsilon)$ code $\qty{E'_m,D'_m:m\in\mathcal{M}}$ for $( EW^{\otimes n}, EV^{\otimes n})$. 
The error and leakage only depend directly on the encoder and channels compositions $E({E}W^{\otimes n})^{\otimes n'} $ and $E({E}V^{\otimes n})^{\otimes n'} $, thus they do not change, and thus the code  $\qty{E'_m E^{n\smash'},D'_m:m\in\mathcal{M}}$ is a $(nn',(C^1( EW^{\otimes n}, EV^{\otimes n}) - \epsilon)/n,\epsilon)$ code for $(W,V)$. 
Therefore $C^1( EW^{\otimes n}, EV^{\otimes n}) /n$ is achievable for $(W,V)$.
Since the above holds for all $n$ and $E$, taking the supremum concludes the proof.
\end{proof}

In this section we showed that there exist modular coding schemes constructed from suitable transmission codes and BRI functions which achieve the security capacity of the classical-quantum wiretap channel and provide semantic security. Compared to the results of \Cref{sscocqwc}, the message sets of these modular codes are given explicitly via the BRI function. In particular, they do not depend on the wiretap channel.

\section{Further Perspectives}

In classical information, not only discrete channels, but also continuous channels are important subjects of study. In~\cite{Wi/Bo} semantic security was demonstrated for both discrete channels and continuous channels. Thus it will be very interesting to analyze if we can extend these results to continuous quantum channels.
As mentioned above, the results of~\cite{Wi/Bo} show how a non-secure code can be transformed into a semantic secure code.
Thus it will be a promising next step to analyze if these results can be extend to non-secure code for continuous quantum channels, e.g., classical-quantum Gaussian channels, which are continuous-variable classical-quantum channels undergoing a Gaussian-distributed thermal noise~\cite{Ho/So/Hi}. 
Furthermore, similar to the discrete channels, one can consider that
the eavesdropper will have access to the environment's final state~\cite{Ru/Ma} for
continuous quantum channels as well.
Thus it will be an interesting further step to analyze if the results of \Cref{totcwtraqc} can be extended to 
continuous quantum channels.
Further discussions will be the extension of these techniques on more complicated networks,
e.g., arbitrarily varying wiretap channels. This is currently also still open for classical networks.

\section*{Acknowledgments}

Holger Boche, Minglai Cai, Christian Deppe, and Roberto Ferrara were supported by the German Federal Ministry of Education and Research (BMBF) through Grants 16KISQ028 (Deppe, Ferrara), 16KISQ020 (Boche), 16KIS0948 (Boche, Wiese) and 16KISQ038 (Deppe, Ferrara).
We thank the research hub 6G-life under Grant 16KISK002 for their support of Holger Boche and Christian Deppe. Holger Boche and Moritz Wiese were supported by the German Research Foundation (DFG) within the Germany’s Excellence Strategy - EXC 2092 CASA- 390781972.

\appendix
\setcounter{equation}{0}
\renewcommand{\theequation}{\thesection\arabic{equation}}

\section{Technical Lemmas}%
\label{WnbtVdto}

We now bound the $V$ dependent term of \Cref{lvxshbcqc}. Before stating the actual lemma, we need to recall some facts about typical sequences and typical operators, as can be found, e.g., in~\cite{Wil}.

Let $\mathcal{X}$ be a finite set.
Let $P$ be a probability function on
$\mathcal{X}$.
Let $\delta > 0$ and $n \in \mathbb{N}$. The set $\mathcal T_{P,\delta}^n$ of typical sequences of $P$ consists of those $x^n\in\mathcal X^n$ satisfying
\begin{itemize}
    \item $\displaystyle \qty| \frac{1}{n} N(a\mid x^n) - P(x') | \leq \frac{\delta}{|\mathcal{X}|}$ for all $a\in \mathcal{X}$,
    \item $N(a\vert x^n)=0$ if $P(a)=0$ for all $a$ in $\mathcal X$,
\end{itemize}
where $N(a\mid x^n)$ is the number of occurrences of the symbol $a\in\mathcal X$ in the sequence $x^n$. 

Let $H$ be a finite-dimensional
complex Hilbert space.
Let $\rho \in \mathcal{S}(H)$ be a state 
with spectral decomposition $\rho = \sum_{x} P(x)    \ketbra{x}$.
For any other basis of eigenvectors the same statements will be valid.
The $\delta$-typical subspace is defined as the subspace spanned
by $\qty{\ket{x^n}: x^n \in {\mathcal{T}}^n_{P, \delta}}$,
where  $\ket{x^n}\coloneqq\bigotimes_{i=1}^n \ket{x_i}$.  
The orthogonal projector onto the $\delta$-typical subspace is
\[\Pi^n_{\rho,\delta}=\sum_{x^n\in {\mathcal{T}}^n_{P,\delta}} \ketbra{x^n}\text{ .}\]
and satisfies the following properties.
There are positive constants $\alpha(\delta)$, $\beta(\delta)$, 
and $\gamma(\delta)$, depending on $\delta$ such that
for large enough $n$
\begin{align} 
\label{te1}
&\quad \tr({\rho}^{\otimes n} \Pi^n_{\rho ,\delta}) 
> 1-2^{-n\alpha(\delta)}
\text{ ,}
\\
\label{te2}
2^{n(S(\rho)-\beta(\delta))}
&\le \tr (\Pi^n_{\rho ,\delta})
\le 2^{n(S(\rho)+\beta(\delta))}
\text{ ,}
\\
\label{te3}
2^{-n(S(\rho)+\gamma(\delta))} \Pi^n_{\rho ,\delta} 
&\le \Pi^n_{\rho ,\delta} \cdot {\rho}^{\otimes n} \cdot \Pi^n_{\rho ,\delta}
\le 2^{-n(S(\rho)-\gamma(\delta))} \Pi^n_{\rho ,\delta}
\text{ .}
\end{align}

Similarly let $\mathcal{X}$ be a finite set, and  $H$ be a finite-dimensional
complex Hilbert space.
Let
$V:\mathcal{X}\to
\mathcal{S}(H)$ be a classical-quantum
channel. For $a\in \mathcal{X}$  suppose
$V(a)$ has
the spectral decomposition $V(a)$
$ =$ $\sum_{j}
V(j|a) |j\rangle\langle j|$
for a stochastic matrix
$V(\cdot|\cdot)$.
 The $\alpha$-conditional typical
subspace of $V$ for a typical sequence   $a^n$ is the
subspace spanned by
 $\left\{\bigotimes_{a\in\mathcal{X}} \ket{j^{\mathtt{I}_a}}, j^{\mathtt{I}_a} \in \mathcal{T}^{\mathtt{I}_a}_{V(\cdot|a),\delta}\right\}$.
Here $\mathtt{I}_a$ $:=$ $\{i\in\{1,\cdots,n\}: a_i = a\}$ is an indicator set that selects the indices $i$ in the sequence $a^n$
$=$ $(a_1,\cdots,a_n)$ for which the $i$-th
symbol $a_i$ is equal to $a\in\mathcal{X}$.
The subspace is often referred to as the $\alpha$-conditional typical
subspace of the state  $V^{\otimes n}(a^n)$.
The orthogonal subspace projector which projects onto it is defined as
\[\Pi^n_{V, \alpha}(a^n)=\bigotimes_{a\in\mathcal{X}}
\sum_{j^{\mathtt{I}_a} \in {\cal
T}^{\mathtt{I}_a}_{V(\cdot \mid a^n),\alpha}}|j^{\mathtt{I}_a} \rangle\langle j^{\mathtt{I}_a}|\text{ .}
\]

For 
$a^n \in {\mathcal{T}}^n_{P, \alpha}$ 
there are positive constants $\beta(\alpha)'$, $\gamma(\alpha)'$, 
and $\delta(\alpha)'$, depending on $\alpha$ such that
\begin{gather} 
\label{te4}
\tr\left({V}^{\otimes n}(x^n) \Pi^n_{{V}, \delta}(x^n)\right)
> 1-2^{-n\alpha'(\delta)}
\text{ ,}
\\
\label{te6}
2^{n(S({V}|P)-\beta'(\delta))}\le \tr\left(
\Pi^n_{{V}, \delta}(x^n) \right)\le 2^{n(S({V}|P)+\beta'(\delta))}\text{ ,}
\\
\label{te5}
\begin{aligned}
2^{-n(S({V}|P)+\gamma'(\delta))} \Pi^n_{{V}, \delta}(x^n)
 &\le \Pi^n_{{V}, \delta}(x^n){V}^{\otimes n}(x^n) \Pi^n_{{V}, \delta}(x^n)\\
 &\le 2^{-n(S({V}|P)-\gamma'(\delta))} \Pi^n_{{V}, \delta}(x^n)\text{ .}
\end{aligned}
\end{gather}

For the classical-quantum channel
$V:\mathcal{X}\to
\mathcal{S}(H)$  and a probability
distribution $P$ on $\mathcal{X}$  we define
 a quantum state $PV$ $:=$ $\sum_{a}P(a) V(a)$ on $\mathcal{S}(H)$.
Clearly, one can then speak of the
orthogonal subspace projector $\Pi_{PV, \delta}$
 fulfilling \Cref{te1,te2,te3}.
For $\Pi_{PV, \delta}$ there is a positive constant $\alpha(\delta)''$ such that for every $x^n\in{\mathcal{T}}^n_{P, \delta}$, the
following inequality holds:
\begin{equation} 
\label{te7}  
\tr \left(  {V}^{\otimes n}(x^n) \cdot \Pi^n_{P{V}, \delta } \right)
\geq 1-2^{-n\alpha''(\delta)} \text{ .}
\end{equation}

\begin{lemma}
\label{0dltsnpo}
Let $V: \mathcal{X}\to\mathcal{S}({\cal H})$ be a  classical-quantum channel. For any $\delta>0$ and probability distribution $P$ over $\mathcal{X}$, define the subnormalized classical-quantum channel $V':\mathcal T_{P,\delta}^n\to\mathcal L(\mathcal H^n)$ by
\begin{equation}
{V'}(x^n)\coloneqq
\Pi^n_{PV, \delta }    
\Pi^n_{V,\delta}(x^n)
\cdot V^{\otimes n}(x^n) \cdot 
\Pi^n_{V,\delta}(x^n) 
\Pi^n_{PV, \delta }.
\end{equation}
We assume that the inputs are chosen from a set of  typical sequence $\mathcal{T}^n_{P,\delta}$ with a probability distribution $P$ and a positive $\delta$.
Then $V'\leq V^{\otimes n}$. Moreover, there exist positive $\eta(\delta)$ and $\gamma''(\delta)$ such that if $n$ is sufficiently large, $V'$ is a $2^{-n\eta(\delta)}$-subnormalized classical-quantum channel and
\begin{equation} 
\rank[V' (\mathcal{X}^n)] \max_{x^n\in \mathcal{T}^n_{P,\delta}} \|V'(x^n)\|_\infty
 \leq 2^{n\chi(P;V)+n\gamma''(\delta) }.
\end{equation}
\end{lemma}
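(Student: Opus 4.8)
The statement bundles three claims about $V'$: the operator inequality $V'\le V^{\otimes n}$, that $V'$ is a $2^{-n\eta(\delta)}$-subnormalized classical-quantum channel, and the rank-times-norm estimate. The plan is to prove them in this order. The one structural fact I would use throughout is that the conditional-typical projector $\Pi^n_{V,\delta}(x^n)$ is diagonal in an eigenbasis of $V^{\otimes n}(x^n)$ and hence commutes with it; apart from that, everything is a direct application of the standard typical-subspace estimates \Cref{te1,te2,te3,te4,te5,te6,te7}.

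For the operator inequality I would argue as follows. Using $[\Pi^n_{V,\delta}(x^n),V^{\otimes n}(x^n)]=0$ one gets $\Pi^n_{V,\delta}(x^n)V^{\otimes n}(x^n)\Pi^n_{V,\delta}(x^n)=\sqrt{V^{\otimes n}(x^n)}\,\Pi^n_{V,\delta}(x^n)\,\sqrt{V^{\otimes n}(x^n)}\le V^{\otimes n}(x^n)$, and then one checks that conjugating this positive operator by the further projector $\Pi^n_{PV,\delta}$ keeps it below $V^{\otimes n}(x^n)$, so that $V'(x^n)\le V^{\otimes n}(x^n)$ and in particular $V'(x^n)\ge 0$. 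Getting the roles of the two (non-commuting) projectors straight here is the one step that needs genuine care, and I expect it to be the main obstacle; the remaining two parts are bookkeeping.

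For the subnormalization, $\tr V'(x^n)\le 1$ is immediate from $\Pi^n_{PV,\delta}\le\mathrm{id}$, $\Pi^n_{V,\delta}(x^n)\le\mathrm{id}$, cyclicity of the trace and $\tr(\Pi A)\le\tr A$ for $A\ge0$. For the lower bound I would use the commutation to collapse one copy of $\Pi^n_{V,\delta}(x^n)$, writing $\tr V'(x^n)=\tr[V^{\otimes n}(x^n)\,\Pi^n_{V,\delta}(x^n)\,\Pi^n_{PV,\delta}]=1-\tr[V^{\otimes n}(x^n)(\mathrm{id}-\Pi^n_{V,\delta}(x^n)\Pi^n_{PV,\delta})]$, and split $\mathrm{id}-\Pi^n_{V,\delta}(x^n)\Pi^n_{PV,\delta}=(\mathrm{id}-\Pi^n_{V,\delta}(x^n))+\Pi^n_{V,\delta}(x^n)(\mathrm{id}-\Pi^n_{PV,\delta})$. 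The first summand contributes at most $2^{-n\alpha'(\delta)}$ by \Cref{te4}, and since $V^{\otimes n}(x^n)\Pi^n_{V,\delta}(x^n)\le V^{\otimes n}(x^n)$, the second contributes at most $\tr[V^{\otimes n}(x^n)(\mathrm{id}-\Pi^n_{PV,\delta})]\le 2^{-n\alpha''(\delta)}$ by \Cref{te7}; here the hypothesis $x^n\in\mathcal T^n_{P,\delta}$ is exactly what is needed. Taking $\eta(\delta)$ slightly below $\min(\alpha'(\delta),\alpha''(\delta))$ finishes it for large $n$.

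For the rank-times-norm bound, every $V'(x^n)$ has its support inside the range of $\Pi^n_{PV,\delta}$, hence so does the average $V'(\mathcal X^n)$, whence $\rank V'(\mathcal X^n)\le\tr\Pi^n_{PV,\delta}\le 2^{n(S(PV)+\beta(\delta))}$ by \Cref{te2} applied to the state $PV$. Conjugation by $\Pi^n_{PV,\delta}$ cannot increase the operator norm, so $\norm{V'(x^n)}_\infty\le\norm{\Pi^n_{V,\delta}(x^n)V^{\otimes n}(x^n)\Pi^n_{V,\delta}(x^n)}_\infty\le 2^{-n(S(V|P)-\gamma'(\delta))}$ by \Cref{te5}. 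Multiplying the two bounds and using $S(PV)-S(V|P)=\chi(P;V)$ yields the claim with $\gamma''(\delta):=\beta(\delta)+\gamma'(\delta)$, enlarging $n$ as needed to absorb the ``sufficiently large $n$'' provisos of the cited estimates.
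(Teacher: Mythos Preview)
Your proof follows essentially the same route as the paper's. For the subnormalization, the paper introduces $V''(x^n)=\Pi^n_{V,\delta}(x^n)V^{\otimes n}(x^n)\Pi^n_{V,\delta}(x^n)$ and writes $\tr V'(x^n)=\tr V''(x^n)-\tr[(I-\Pi^n_{PV,\delta})V''(x^n)]$, bounding the two pieces by \Cref{te4} and (via $V''\le V^{\otimes n}$) \Cref{te7}; this is exactly your splitting after using commutation and cyclicity. For the rank--norm estimate the paper does precisely what you do: bound the rank by $\tr\Pi^n_{PV,\delta}\le 2^{n(S(PV)+\beta(\delta))}$ via \Cref{te2} and the norm by $2^{-n(S(V|P)-\gamma'(\delta))}$ via \Cref{te5}, obtaining $\gamma''(\delta)=\beta(\delta)+\gamma'(\delta)$.

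One remark: you are right to flag the operator inequality $V'(x^n)\le V^{\otimes n}(x^n)$ as the delicate point. The paper simply declares it ``obvious'' and moves on; your instinct that the outer conjugation by $\Pi^n_{PV,\delta}$ is where the difficulty lies is correct, since for non-commuting $\Pi$ and $\rho$ one does \emph{not} have $\Pi\rho\Pi\le\rho$ in general. So your write-up is, if anything, more honest than the paper's on this step; the remaining two parts are carried out the same way and your choice of $\eta(\delta)$ just below $\min(\alpha'(\delta),\alpha''(\delta))$ is in fact slightly cleaner than the paper's $\eta(\delta)=\min(\alpha'(\delta),\alpha''(\delta))$, which silently drops a factor of~$2$.
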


\begin{proof}
It is obvious that $V'\leq V^{\otimes n}$. To check that the trace of $V'$ is close to 1, let $x^n\in\mathcal T_{P,\delta}^n$ and define 
\[
    V''(x^n)=\Pi_{V,\delta}^n(x^n)V^{\otimes n}\Pi_{V,\delta}^n(x^n).
\]
Clearly
\[
    \tr(V'(x^n))
    =\tr(V''(x^n))-\tr((I-\Pi_{PV,\delta}^n)V''(x^n)).
\]
By \Cref{te4}, $\tr(V''(x^n))\geq 1-2^{-n\alpha'(\delta)}$. Also, it is clear that $V^{\otimes n}(x^n)$ commutes with $\Pi_{V,\delta}^n(x^n)$ and that $V''(x^n)\leq V^{\otimes n}(x^n)$. Therefore 
\begin{align*}
    \tr((I-\Pi_{PV,\delta}^n)V''(x^n))
    &\leq\tr((I-\Pi_{PV,\delta}^n)V^{\otimes n}(x^n))\\
    &\leq 2^{-n\alpha''(\delta)}.
\end{align*}
Altogether, if we set
\[
    \eta(\delta)=\min\{\alpha'(\delta),\alpha''(\delta)\},
\]
we obtain that
\[
    \tr(V'(x^n))\geq1-2^{-n\eta(\delta)},
\]
so $V'$ is a $2^{-n\eta(\delta)}$-subnormalized version of $V^{\otimes n}$.


Now we bound 
$ \max_{x^n\in \mathcal{T}^n_{P,\delta}}\|V'(x^n)\|_\infty$.
By \Cref{te5}, for any $x^n\in \mathcal{T}^n_{P,\delta}$ we have
\begin{align}
{V'}(x^n)
&\leq
\Pi^n_{V,\delta}(x^n)
\cdot V^{\otimes n}(x^n) \cdot 
\Pi^n_{V,\delta}(x^n) 
\\&\leq
2^{-n(S(V|P)-\gamma'(\delta))}
\Pi^n_{V,\delta}(x^n) 
\text{ ,}
\end{align} 
thus implying that 
\begin{align}
\max_{x^n\in \mathcal{T}^n_{P,\delta}} \norm{{V'}(x^n)}_\infty
&\leq
2^{-n(S(V|P)-\gamma'(\delta))}.
\label{vxnpvp3} 
\end{align}

We finally bound the $\rank [V'(\mathcal{X}^n) ]$.
By \Cref{te2} we have
\begin{align}
\rank [V'(\mathcal{X}^n) ]
&\leq \rank [\Pi^n_{PV, \delta} ]
= \tr\Pi^n_{PV, \delta} 
\leq 2^{n (S(PV)+\beta(\delta))}
\text{ .}
\label{vxnpvp4} 
\end{align}

We combine \Cref{vxnpvp3,vxnpvp4} and obtain
\begin{equation}
\rank[V'(\mathcal{X})] \max_{x^n\in T_{P,\delta}^n} \norm{V'(x^n)}_\infty  
\leq 2^{n(\chi(P;V)+\beta(\delta) +\gamma'(\delta))}   \text{ .}
\label{vxnpvp5}
\end{equation} 
\end{proof}

\end{document}